\documentclass[11pt]{article}

\usepackage[utf8]{inputenc}
\usepackage{amsmath}
\usepackage{xurl}
\usepackage{amsfonts}
\usepackage{mathtools}
\usepackage{graphicx}
\usepackage{subcaption} 
\usepackage{color}
\usepackage[hidelinks]{hyperref}
\usepackage{tabularx}
\usepackage{float}
\usepackage{array}
\usepackage{multirow}
\usepackage{makecell}
\usepackage{url}
\usepackage{colortbl}
\usepackage{dsfont}
\usepackage{stmaryrd}
\usepackage{natbib}
\usepackage{caption}
\usepackage{color}
\usepackage{xcolor}
\usepackage{algorithm}
\usepackage{algpseudocode}
\usepackage{titlesec}



\jot=7pt
\textheight 9in
\textwidth 6.3in
\topmargin -1.2cm
\hoffset -1.3cm
\renewcommand{\baselinestretch}{1.3}

\makeatletter
\@addtoreset{equation}{section}

\makeatother

\newcounter{Fig}[figure]

\newcounter{Tab}[table]

  {\refstepcounter{Tab}
   \addtocounter{table}{1}
   \samepage\vspace{0.2cm}
   \centerline{#1} \nobreak
   \begin{verse}{Table \thesection.\arabic{Tab}:}
  }%
  {\end{verse} \vspace{0.2cm}}

\relax

\newcommand{\bqa}{\begin{eqnarray*}}
\newcommand{\eqa}{\end{eqnarray*}}
\newcommand{\bqan}{\begin{eqnarray}}
\newcommand{\eqan}{\end{eqnarray}}
\newcommand{\bqt}{\begin{quote}}
\newcommand{\eqt}{\end{quote}}
\newcommand{\bt}{\begin{tabbing}}
\newcommand{\et}{\end{tabbing}}
\newcommand{\bit}{\begin{itemize}}
\newcommand{\eit}{\end{itemize}}
\newcommand{\ben}{\begin{enumerate}}
\newcommand{\een}{\end{enumerate}}
\newcommand{\beq}{\begin{equation}}
\newcommand{\eeq}{\end{equation}}
\newcommand{\bdefi}{\begin{definition}}
\newcommand{\edefi}{\end{definition}}
\newcommand{\bpro}{\begin{proposition}}
\newcommand{\epro}{\end{proposition}}
\newcommand{\blem}{\begin{lemma}}
\newcommand{\elem}{\end{lemma}}
\newcommand{\bth}{\begin{theorem}}
\newcommand{\eth}{\end{theorem}}
\newcommand{\bco}{\begin{corollary}}
\newcommand{\eco}{\end{corollary}}
\newcommand{\bdes}{\begin{description}}
\newcommand{\edes}{\end{description}}
\newcommand{\bre}{\begin{remark}}
\newcommand{\ere}{\end{remark}}

\newtheorem{definition}{Definition}[section]
\newtheorem{proposition}[definition]{Proposition}
\newtheorem{lemma}[definition]{Lemma}
\newtheorem{theorem}[definition]{Theorem}
\newtheorem{corollary}[definition]{Corollary}
\newtheorem{remark}[definition]{Remark}

\newcounter{assump}
\newenvironment{proof}[1][Proof]{\textbf{#1.} }{\ \rule{0.5em}{0.5em}}
\newtheorem{assum}[assump]{Assumption}

\begin{document}

\begin{titlepage}

\title{Combination of traditional and parametric insurance: calibration method based on the optimization of a criterion adapted to heavy tail losses.}

\author{{\large Olivier L\textsc{opez}$^1$},
{\large Daniel N\textsc{kameni}$^{1,2}$}  }

\date{\today}
\maketitle

\renewcommand{\baselinestretch}{1.1}

\begin{abstract}
In this paper, we address the problem of providing insurance protection against heavy-tailed losses, for which the expected loss may not even be finite. The product we study is based on a combination of traditional insurance up to a given limit and a parametric (or index-based) cover for larger losses. This second component of the coverage is computed from covariates available immediately after the loss occurs, allowing claim management costs to be reduced through rapid compensation. To optimize the design of this second component, we use a criterion adapted to extreme losses, that is, to loss distributions of Pareto type. We support the calibration procedure with theoretical results establishing its convergence rate, as well as empirical evidence from both a simulation study and a real-data analysis on tornado losses in the United States. We also propose a two-step optimization procedure as a potential solution to the issue of data scarcity in the tails of loss distributions. We conclude by empirically demonstrating that the proposed hybrid contract outperforms a traditional capped indemnity contract.
\end{abstract}

\vspace*{0.5cm}

\noindent{\bf Key words:} Parametric insurance; heavy-tailed distributions; natural disasters; capped indemnity contracts.

\vspace*{0.5cm}

\noindent{\bf Short title:} Combination of Traditional and Parametric insurance for heavy-tailed losses.

\vspace*{.5cm}

{\small
\parindent 0cm
$^1$ CREST Laboratory, CNRS, Groupe des Écoles Nationales d'Économie et Statistique, Ecole Polytechnique, Institut Polytechnique de Paris, 5 avenue Henry Le Chatelier 91120 PALAISEAU, France \\
$^2$ Detralytics, 1-7 Cours Valmy 92923 Paris La Défense Cedex, France. E-mails: olivier.lopez@ensae.fr, daniel.nkameni@ensae.fr}

\end{titlepage}

\small
\normalsize
\addtocounter{page}{1}

\section{Introduction}

The modern landscape of insurance is characterized by the emergence of new risks whose potential of destruction question the feasibility of developing a sustainable financial protection. Several reports\footnote{
see the report for the European Insurance and Occupational Pensions Authority, \url{https://www.eiopa.europa.eu/leveraging-insurance-shore-europes-climate-resilience-2024-09-03_en} or the report on insurability of climate risk mandated by the French ministry of Economy, see \url{https://www.ecologie.gouv.fr/sites/default/files/documents/Rapport_final_Mission-assurance_climat.pdf}.} alerted about the potential uninsurability of natural disasters in the context of climate change. In the context of cyber, the difficulty to cover risks that are increasing in terms of frequency, severity and variety is materialized by significant protection gaps\footnote{See for example the report from the Global Federation of Insurance Associations in March 2023, \url{https://gfiainsurance.org/topics/487}.}. To overcome these challenges, parametric insurance (or "index-based insurance") is often mentioned as one of the technical tools that may help offering a coverage in these apparently desperate situations: a report from the French Direction of Treasury on cyber insurance\footnote{see \url{https://www.tresor.economie.gouv.fr/Articles/2022/09/07/remise-du-rapport-sur-le-developpement-de-l-assurance-du-risque-cyber}.} mention these "innovative products" as a potential solution to respond to the need for financial protection against this expanding threat.

The concept of parametric insurance offers indeed a convenient solution to the difficulty to cover risks that are difficult to apprehend, either because they are too volatile, or because lack of data makes it hard to develop models required to build an efficient and profitable protection. The trick is essentially not to cover the risk itself, but to build a product for which compensation is determined by the values taken by a parameter (or index) available just after the claim. Multiple examples of application can be found for example in \cite{abdi2022index}, \cite{eze2020exploring}, \cite{lin2020application}, \cite{carter2017index} or \cite{alderman2007insurance}. Of course, the compensation mechanism has to be determined so that it matches with the true loss experienced by the policyholder, that is one needs to control the "basis risk" (see \cite{doherty2002moral} or \cite{elabed2013managing}). But if this difference between the loss and the compensation is contained, the product benefits from many advantages:
\begin{itemize}
\item the parameter is usually considered because many data are available on it and/or because its distribution is easier to control from a risk management perspective than the one of the true loss. Better knowledge of the parameter means more control on the compensation mechanism.
\item the availability of the parameter reduces expertise and claim management costs, it also ensures a faster compensation for the policyholder.
\item the rules are clear, and this should lead to fewer legal dispute between policyholders and insurers.
\end{itemize}

Another advantage of this type of mechanism is the possibility to develop Insurance Linked Securities that are based on these parametric products, allowing to transfer the risk to capital markets, see \cite{braun2023cyber}, \cite{securities2009handbook}, \cite{michel2008extreme}. Although we will not explore this dimension in the present paper, focusing on the ability of such a product to cover a damage, this dimension should not be forgotten as an important leverage to mobilize more capital (via the markets) in the absorption of catastrophic risks, making these products more appealing in a broader perspective.

In the present paper, we consider the design of parametric insurance as a complement of traditional insurance, when the distribution of the loss is heavy-tailed. In the field on Insurance Linked Securities, \cite{hardle2010calibrating} or \cite{mastroeni2022pricing} are examples where a Pareto-type variable models the underlying risk. In this case, full coverage of the loss by a traditional insurance contract may be difficult due to a too important premium, or even impossible if the expectation of the loss is infinite. The parametric insurance product is used on top of a limited insurance contract to improve the protection at an affordable price. Many market studies show the popularity of parametric solutions to cover the upper segment of the risk\footnote{see for example the report from PricewaterhouseCoopers, \url{https://www.pwc.ch/en/insights/fs/closing-the-gap-with-parametrics-insurance.html}}.

\textbf{Review of literature}.

Relying on index based products (parametric insurance or catastrophic bonds) has been widely considered to deal with various types of risks. This is for example the case in \cite{BIAGINI2008214} and \cite{gatzert2019convergence} for catastrophic risk. Agriculture insurance is also a major domain in which parametric products have been used, see for example \cite{ozaki2008parametric} or \cite{prokopchuk2020parametric}. Index based products have also been considered in pandemic risk management (see \cite{huang2019estimating}, with sometimes mixed experiences as in \cite{jonas2019pandemic}) and in life insurance when it comes to deal with longevity risk, see for example \cite{Cairns18022021}. The question of dealing with basis risk, an important limitation of parametric insurance, is also well documented, as in \cite{chantarat2013designing}, \cite{clarke2016theory} or \cite{jensen2016index}.

To design optimal parametric insurance, several approaches have been studied in the literature. A first category includes optimization of criteria related to utility theory, as in \cite{louaas2021optimal}, or \cite{tan2024flexible}, with various elaborate techniques have been proposed, from regression trees to neural networks in \cite{chen2024managing}. In presence of heavy tail risk, \cite{conradt2015tailored} relied on quantile regression, see also \cite{GU2023106350} in the case of earthquakes. 

Our approach differs in that we consider a different metric — one that focuses on the ratio between the compensation offered by the product and the loss suffered, rather than on their difference. This framework seems adapted to the study of heavy tail distributions, first of all because it avoids dealing with quantities that may have infinite expectation. Also because, in this context, the risk is so important that it is implicit that it is impossible to cover it perfectly. On the other hand, the percentage of the loss that is left to the policyholder is a concrete indicator of the performance of the solution, on which it is easy to communicate. Moreover, our approach combines indemnity based system (based on the true value of the loss) and a index-based component.

Recently, a similar combination of traditional insurance and a parametric product designed via neural networks has been proposed by \cite{zhang2024blended}, but in an opposite configuration, where the parametric product is here to deal with small claims, therefore with no need to confront to the heavy tail nature of the risk in the design of the index.

\textbf{Structure of the paper.}

The contributions of this paper are driven by the need to determine the optimal protection mechanism of the policyholder, when this mechanism is made of a combination of traditional insurance, supplemented by some parametric product to cover the upper segment. Our analysis is performed in four steps:
\begin{enumerate}
\item We first define in section \ref{sec:util} a function that we want to optimize, in conceiving the product. The idea is to find a compromise between the proportion of the loss that is covered, and an affordable price.
\item Once this metric has been defined, we define an optimal cover based on the available information after the claim (see section \ref{sec:risk} to \ref{sec:optim}).
\item We provide a statistical methodology to approach this optimal policy based on available data (section \ref{sec:optimization}).
\item We investigate the behavior of the procedure on simulations (section \ref{sec:simul}) and illustrate the feasibility of the method, on tornado loss data from the US (section \ref{sec:real}).
\item We compare the proposed contract to the capped indemnity contract which is generally used by insurers (section \ref{sec:comp_analysis})
\end{enumerate}

The technical proofs are postponed to section \ref{sec:appendix}.

\section{Index based reinsurance}
\label{sec:util}

This section is devoted to defining the framework of the insurance cover that we study in this paper, namely a combination of "traditional" and parametric insurance. The first step is to introduce, in section \ref{sec:putility}, a proper metric, adapted to measure what is expected from this product, in the context of heavy tail losses. A part of this metric is made to a way to materialize the aversion to high prices, which is discussed in section \ref{sec:choicef}. We then describe a specific type of product based on a combination of traditional insurance with parametric insurance taking over after some threshold in section \ref{sec:risk}. A first way to design the index-based part of the product is discussed in section \ref{sec:optim}.

\subsection{A metric to optimize}
\label{sec:putility}

In the following, we consider a policyholder who will be affected by the loss $Y$ (which may be zero if no claims happen during the period under study). In this section, we propose a way to measure how an insurance contract responds to the needs of the policyholder, in the particular case where $Y$ is heavy-tailed, that is

\begin{equation}\label{eq:type}
S_Y(t)=\mathbb{P}(Y > t) = \frac{l(t)}{t^{1/\gamma}},
\end{equation}

where $\gamma>0$ is called the tail index and $l$ is a slow-varying function, that is
\begin{equation}\forall h>0, \lim_{t\rightarrow \infty}\frac{l(ht)}{l(t)}= 1. \label{slowvarying}\end{equation} This class of distributions plays an important role in extreme value analysis, see \cite{beirlant}, and is particularly adapted to consider risk with huge potential losses.

Consider an insurance contract that offers the compensation $X,$ usually smaller than $Y.$ We want to maximize the proportion of the risk that is covered, that is the ratio $X/Y.$ In case of large risks, this indicator can be more convenient than looking at the difference between $X$ and $Y:$ if the tail of $Y$ is heavier than the tail of $X,$ this may exhibit large gaps that blur the perception of the performance. The ratio is less affected by this phenomenon. Hence, in the following, we consider that the aim is to maximize
\begin{equation}\label{eq:general}\mathfrak{L}_{X}=E\left[L\left(\frac{X}{Y}-f(\pi_X)\right)\right],
\end{equation}
where $L$ is a non-decreasing function, $f$ is a non-decreasing function that materializes the aversion to high prices, and $\pi_X$ is the loaded insurance premium associated with the compensation $X$. We adopt the convention $0/0=1.$

For the function $L,$ an obvious choice is the identity $L(x)=x$, but other choices could be more suitable as discussed in section \ref{sec:approxim}. The role of the function $f$ is discussed in section \ref{sec:choicef}.

\subsection{Price aversion}
\label{sec:choicef}

We now discuss the choice of the function $f.$ In (\ref{eq:general}), the benefits of an increase of the ratio $X/Y$ can be compensated by an increase of the price, hence of $f(\pi_X).$ Since the ratio $X/Y$ is supposed to be bounded\footnote{The spirit is to have $X/Y\leq 1.$ Overcompensation can exist, but should be particularly rare, otherwise the product would not be interesting anymore for the insurer. Therefore $X/Y$ is expected to be bounded by a constant slightly larger than 1 except of an event a small probability.}, it is natural to consider a non-decreasing function $f:\mathbb{R} \rightarrow 
 [0,1]$ that is bounded, like, for example,
$$f_l(\pi)=\frac{\kappa}{1+\exp(-\beta \pi)},$$
and a rational fraction
$$f_r(\pi)=\frac{\kappa \pi^{\beta}}{1+\pi^{\beta}}.$$

From a statistical perspective, the function $f$ plays the role of a penalty function in the selection of the best cover $X$ from a given family. In this view, the quantity $\kappa$ in $f_r$ and $f_l$ can be understood as a hyperparameter that controls the force of the penalization.

Let us discuss some implications of the choice of a specific form of $f.$ Consider the special case where, $X=X(s)=\min(Y,s).$ 
We have
$$\pi(s)=E[\min(Y,s)]=\int_0^s S_Y(t)dt.$$
Moreover, the function (\ref{eq:general}) (that we here denote $\mathfrak{L}(s)$ to shorten the notation) becomes
\begin{equation}
\label{eq:partial}
\mathfrak{L}(s)=(1-S_Y(s))L\left(1-f(\pi(s))\right)-\int_s^{\infty}L\left(\frac{s}{t}-f(\pi(s))\right)dS_Y(t).
\end{equation}
The derivative of (\ref{eq:partial}) with respect to $s$ is
\begin{equation}
\nonumber \mathfrak{L}'(s)=-\pi'(s)f'(\pi(s))(1-S_Y(s))L'\left(1-f(\pi(s))\right)-\int_s^{\infty} L'\left(\frac{s}{t}-f(\pi(s))\right)\left[\frac{1}{t}-\pi'(s)f'(\pi(s))\right]dS_Y(t).
\end{equation}

Let us consider a few particular situations:
\begin{itemize}
\item[1.] $E[Y]=\pi^+<\infty.$ When $s$ tends to infinity, $\pi(s)\rightarrow \pi^+$ and 
$\pi'(s)=S(s)\rightarrow 0.$ If $f'(\pi^+)>0,$ the first term is $$-\frac{l(s)f'(\pi^+)}{s^{1/\gamma}}L'\left(1-f(\pi^+)\right)+o\left(\frac{1}{s^{1/\gamma}}\right),$$
while the second is positive but is $O(s^{-1/\gamma-1}).$ Hence, for $s$ large enough, $\mathfrak{L}'(s)<0,$ which means that it corresponds to a situation where the policyholder is not necessarily seeking for a perfect protection against the risk (which would be $s=\infty$): after some point, the burden of a high premium is too important, and $\mathfrak{L}$ decreases.
\item[2.] $E[Y]=\infty.$ Then, if $f(\pi(s))$ is bounded, $f'(\pi(s))\rightarrow 0$ as $s$ tends to infinity. In this case, $\mathfrak{L}'(s)$ is equivalent, for large $s,$ to $-l(s)f'(\pi(s))s^{-1/\gamma}L'(1-f(\pi(s)))$ only if it gets to zero faster than $s^{-1/\gamma-1}$ (the positive term in the decomposition of $\mathfrak{L}'$), which happens if $l(s)sf'(\pi(s))\rightarrow \infty.$ If this condition does not hold, this corresponds to the situation where the policyholders are ready to pay any amount of premium against a perfect protection. Note that condition $l(s)sf'(\pi(s))\rightarrow \infty$ holds for function $f_r$ if $\beta<\gamma/(\gamma-1)$ in the case $\gamma>1,$ since $\pi(s)=O(s^{1-1/\gamma})$ and $f_r(\pi)=O(\pi^{-(1+\beta)})$ when $\pi$ tends to infinity. On the other hand, it is not the case for $f_l$ since $f_l(\pi)=O(\exp(-\beta \pi)).$
\end{itemize}

Clearly, the practical choice of $f$ should be calibrated from data describing the behavior of the policyholders, but this discussion shows that some shapes can be eliminated based on some basic knowledge on their behavior regarding the possibility of full protection (ready to pay any price or not).

We now present the general framework (\ref{eq:general}) in the particular case of the parametric insurance protection that we aim to study.

\subsection{Risk and index based compensation}
\label{sec:risk}

In the following, we consider a loss variable $Y$ with survival function $S_Y(t)=\mathbb{P}(Y\geq t).$ The policyholder wants protection against this loss for an affordable price. On the other hand, the insurance coverage is partial, and the pay-off of the insurance contract is
\begin{equation}X_{\phi}(s)=Y\mathbf{1}_{T(\mathbf{W})\leq s}+s\phi(\mathbf{W})\mathbf{1}_{T(\mathbf{W})>s}, \label{eq:xphi}
\end{equation}
with $\mathbf{W}\in \mathbb{R}^d$ is a set of variables that are observed soon after the claim, and will allow to compute a trigger mechanism $T(\mathbf{W})$ and a compensation $s\phi(\mathbf{W})$. The spirit is that the event $\{T(\mathbf{W})>s\}$ should approximately correspond to the event $\{Y>s\}$. In our framework, the parametric insurance product is present to help dealing with large claims that would be beyond insurance capacity. On the other hand, we do not want to rely on precise evaluation of the loss $Y,$ which would be in contradiction with the logic behind parametric insurance (which consists in avoiding relying on expert analysis). Since we expect the parametric part to activate when $Y>s,$ $\phi(\mathbf{W})$ should, in most cases, be larger than 1, but it could be authorized to have $\phi(\mathbf{W})$ less than 1 to "smooth" the behavior around $s$ and to avoid over-compensation . Moreover, the resulting under-compensation could be tolerated by the policyholder due to the fast payment associated with index-based products. The threshold $s$ is essentially a way for the insurer to protect himself against the too large values taken by $Y.$ But instead of simply giving a limit of compensation $s,$ once this threshold is reached, a new mechanism of compensation is used, based on $\mathbf{W}.$

This framework is designed to mimic the behavior of a parametric (or "index-based") insurance products that completes the traditional cover. This logic is similar to \cite{zhang2024blended}, but in a context where the index-based component is activated for large claims instead of small ones. This corresponds to the use of parametric insurance to improve insurability in cases where the market considers that traditional insurance is not profitable enough beyond a given capacity, while \cite{zhang2024blended} corresponds to the other important situation where the administrative costs of small claims are reduced through this channel. The behavior of parametric insurance products can be considered in two steps:
\begin{itemize}
\item a triggering mechanism, namely a condition for the compensation to activate;
\item a compensation, if activated.
\end{itemize}

Regarding the trigger mechanism $T(\mathbf{W}),$ many options are available via binary regression (from logistic regression, quantile regression, to machine learning methods). Indeed, $T(\mathbf{W})$ is typically a score used to predict if $Y>s$ or not. We do not focus on the specific way to build this trigger, since this binary problem is less complex than predicting the value of the heavy tail variable $Y$ itself. We will only state generic conditions on $T(\mathbf{W})$ when it comes to providing our theoretical results.

We can see the product $X_{\phi}$ as a kind of reinsurance mechanism, where $s$ would represent the priority of a stop-loss reinsurance treaty. The quantity $s\phi(\mathbf{W})$ is expected to be close to $Y,$ and, essentially, $s\phi(\mathbf{W})\leq Y$ (at least with high probability) otherwise the compensation would exceed the true loss (a financial product could be designed with such pay-off, but this is not the spirit of insurance products which only aim at repairing and not at generating profit for the policyholder). The pure premium corresponding to this contract is denoted
$\pi^*_{\phi}(s)=E\left[X_{\phi}(s)\right].$ It is assumed to be well defined, that is $\pi_\phi^*(s)<\infty,$ meaning that $E[X]<\infty$ even in the situation where $E[Y]=\infty.$ Moreover, qualitatively, we should keep in mind that this quantity is less than the pure premium (if defined) of an insurance contract entirely covering $Y,$ otherwise this insurance mechanism lacks interest. The final premium $\pi_{\phi}(s)$ paid by the policyholder includes some loading factor $\tau,$ that is
$$\pi_{\phi}(s)=(1+\tau)\pi^*_{\phi}(s).$$
Let us note that here and in the rest of the paper, the frequency of occurrence of the claims is taken into account implicitly: the variable $Y$ may take the value 0 with a non zero probability, materializing the fact that a proportion of policyholders will not experience any claim.

\subsection{A first method of selection of an optimal payoff function $\phi$}
\label{sec:optim}

The question is now to find the appropriate payoff function $\phi$ to maximize (\ref{eq:general}). This is not solely a matter of approximating the true loss $Y:$ if it was the case, and if $E[Y^2]$ were finite, a possible way to proceed would be to consider $\phi(\mathbf{w})=s^{-1}E[Y|Y\geq s, \mathbf{W}=\mathbf{w}],$ which would minimize the mean square error between $X_{\phi}(s)$ and $Y.$ Here, quadratic loss is not the proper metric for heavy tail variables (that may even have infinite expectation and/or variance). Moreover, this would not respect the aversion to high prices materialized by the function $\phi.$

To select the proper function $\phi,$ let us consider that we have at our disposal a class of potential functions $\mathcal{F}=\{\phi_{\theta};\theta \in \Theta\},$ with $\Theta\subset\mathbb{R}^k.$ The dimension $k$ may be very large, for example if one relies on an over-parametrized machine learning model.

In \cite{louaas2021optimal} and \cite{chen2024managing}, the authors propose to maximize an empirical version of the  objective function to select the appropriate value of $\theta$ (hence the corresponding function $\phi_{\theta}$). Transposed in our context, and assuming that we have i.i.d. historical data $(Y_i,\mathbf{W}_i)_{1\leq i\leq n},$ the idea would be to maximize over $\Theta$ the quantity, leading to
\begin{equation}
\label{eq:thetan}\hat{\tilde{\theta}}_n(s)=\arg\max_{\theta\in \Theta} \frac{1}{n}\sum_{i=1}^n L\left(\frac{X_{\theta,i}(s)}{Y_i}-f(\hat{\pi}_{\theta}(s))\right),
\end{equation}
where, to shorten the notation, we use $X_{\theta,i}(s)$ (resp. $\pi_{\theta}(s)$) instead of $X_{\phi_{\theta},i}(s)=Y_i\mathbf{1}_{Y_i\leq s}+s\phi_{\theta}(\mathbf{W}_i)\mathbf{1}_{Y_i>s}$ (resp. $\pi_{X_{\phi_{\theta}}(s)}$), $\hat{\pi}_{\theta}(s)=(1+\tau)n^{-1}\sum_{i=1}X_{\theta,i}(s).$

Let us note that, if the dimension of $\Theta$ is small compared to $n,$ which will be the case in a classical parametric model, under standard assumptions for M-estimators (see \cite{van1996weak}), we get $\hat{\tilde{\theta}}_n(s)-\tilde{\theta}(s)=O_P(n^{-1/2}),$ where
\begin{equation}
\label{eq:theta*}\tilde{\theta}(s)=\arg \max_{\theta\in \Theta}E\left[L\left(\frac{X_{\theta}(s)}{Y}-f(\pi_{\theta}(s))\right)\right]
\end{equation}
corresponds to the parameter of the best payoff function from the class $\mathcal{F}.$ This rate of convergence is informative on the volume of historical data required to approximate the ideal payoff. One of the practical difficulties comes from the fact that the size of this database can be limited, especially for emerging risks: this may be hard to collect a large amount of data where $Y_i$ and $\mathbf{W}_i$ are simultaneously present. On the other hand, it is much easier to get additional data on $\mathbf{W},$ since the measure of these variables used to compute the index can be obtained without any expansive and/or time-consuming expertise.

Therefore, based on a closer analysis of (\ref{eq:general}), we will recommend a two-step procedures that can, in some cases, increase the convergence rate, as described in section \ref{sec:optimization} below.

\section{A new metric to optimize an index-based product}\label{sec:optimization}

In the section, we consider a collection of covers
$$X_{\theta}(s)=Y\mathbf{1}_{T(\mathbf{W})\leq s}+s\phi_{\theta}(\mathbf{W})\mathbf{1}_{T(\mathbf{W})>s},$$
where we want to select, from data, the best function $\phi_{\theta}$ from a collection $\mathcal{F},$ using the notations of section \ref{sec:optim}.

Our first task is to obtain an approximation of $$\mathfrak{L}_\theta(s)=E\left[L\left(\frac{X_{\theta}(s)}{Y}-f(\pi_{\theta}(s))\right)\right],$$ which is done in section \ref{sec:approxim}. From this approximation, we deduce in section \ref{sec:alternative} an alternative way to determine an estimate of the ideal cover, that is the quantity $\tilde{\theta}(s)$ defined by (\ref{eq:theta*}). The benefits of relying on this alternative to the approximation $\hat{\tilde{\theta}}(s)$ defined in (\ref{eq:thetan}) are shown in section \ref{sec:crate} where we study the convergence rates of the different procedures.

\subsection{Approximation of the criterion}
\label{sec:approxim}

Our first step is to investigate the behavior of $\mathfrak{L}_{\theta}(s)$ when $s$ is large. The need for this approximation comes from the fact that we want to design an index-based cover which completes the coverage of the policyholder only for large claims. The design of this parametric part of the product depends on analyzing the conditional distribution of $Y$ given $\mathbf{W}$ where we recall that $\mathbf{W}$ are covariates that are available just after the claim. Modeling the whole distribution of $Y|\mathbf{W}$ can be relatively complex, and the idea is that our final result should be essentially influenced by the tail distribution. Getting rid of the remaining part of the distribution is a simplification, avoiding to make too strong assumptions. This is the spirit of (\ref{eq:type}), where the distribution of $Y$ is specified only in terms of rate of decay of $S_Y,$ without specifying completely the function $l.$

Before stating Theorem \ref{th_approx}, which is the main result of this section, we need some generic assumption on the distribution of $Y|\mathbf{W}$ and the functions $L$ and $f$ involved in the computation of $\mathfrak{L}_{\theta}(s).$

\begin{assum}
\label{a-1}
Let $S(t|\mathbf{w})=\mathbb{P}(Y\geq t|\mathbf{W}=\mathbf{w}).$
Assume that, for all $w,$
$$S(t|\mathbf{w})=\frac{l(t|\mathbf{w})}{t^{1/\gamma(\mathbf{w})}},$$
where $\gamma(\mathbf{w})>0,$ and $l(\cdot|\mathbf{w})$ is a slow-varying function. Moreover, assume that, for some constants $C$ and $c>0$, there exists a slow varying function $l^{0}$ such that

$$cl^{0}(t)\leq l(t|\mathbf{w})\leq Cl^{0}(t).$$

Moreover we assume that
$$\sup_{\theta,\mathbf{w}}\left|\frac{l(st|\mathbf{w})}{l(s|\mathbf{w})}\right|\rightarrow 1,$$
when $s$ tends to infinity.
\end{assum}

Assumption \ref{a-1} above states that the conditional distribution of $Y$ remains in the same family (\ref{eq:type}), but with index parameter potentially depending on $\mathbf{W}.$ The introduction of $\mathbf{l}$ is essentially here to allow us to apply the dominated convergence theorem, controlling the fact that there is no structural change in the conditional distribution of $Y.$

Next, we need two assumptions on the function $L$ and $f.$

\begin{assum}
\label{a0} Let $\pi^+=\lim_{s\rightarrow \infty}\pi_{\theta}(s).$
There exists a function $\varphi_0(t)> 0$ such that
$$\sup_{t}\left|\frac{\frac{L\left(t-f(\pi)\right)}{L\left(1-f(\pi)\right)}-\varphi_0\left(t\right)}{\varphi_0(t)}\right|\rightarrow_{\pi\rightarrow \pi^+}0.$$
\end{assum}

\begin{assum}
\label{a1}
There exists a function $\varphi_1(t)< 0$ such that
$$\sup_{t}\left|\frac{\frac{L'\left(t-f(\pi)\right)}{L\left(1-f(\pi)\right)}-\varphi_1\left(t\right)}{\varphi_1(t)}\right|\rightarrow_{\pi\rightarrow \pi^+}0.$$
\end{assum}

Regarding the definition of the function $L,$

\begin{enumerate}
\item If $L(x) = L_1(x)=x,$ then
\begin{eqnarray*}
\varphi_0(t) &=& \frac{t-f(\pi^+)}{1-f(\pi^+)}, \\
\varphi_1(t) &=& \frac{1}{1-f(\pi^+)},
\end{eqnarray*}
under the condition that $f(\pi^+)<1.$

Another definition of $L$ could be : 
\item $L_2(x)=-\exp(-\mu x)$. In this case, 
\begin{eqnarray*}
\varphi_0(t) &=& \exp(-\mu(t-1)),\\
\varphi_1(t) &=& -\mu \exp(-\mu(t-1)).
\end{eqnarray*}
\end{enumerate}

$L_2$ introduces non-linearity and concavity into the resulting optimization problem. This has the advantage of stabilizing the optimization and numerical results, and also facilitates the identification of the global optimum. Moreover, $L_2$ is more robust to outliers and large values—an especially useful property when dealing with extreme losses.

These advantages of $L_2$ over $L_1$ can be explained by the fact that $L_2$ leads to an optimization that implicitly models the trade-off between mean and variability, unlike $L_1$, which only targets the mean. Indeed, if $X_\theta$ is a random variable, applying a Taylor expansion around $\mathbf{E}[X_\theta]$ gives:

\begin{equation*}
    \mathbf{E}[-\exp(-\mu X_\theta)] \approx -\exp\left(-\mu \mathbf{E}[X_\theta] + \log\left\{1 + \frac{\mu^2}{2}\mathbf{V}[X_\theta]\right\}\right)
\end{equation*}

Thus, $L_2$ approximates a penalty on the variance.

Finally, through a Taylor expansion, $L_2$ can also be interpreted as a generalization of $L_1$ that incorporates higher-order powers of the variable of interest.

\begin{assum}
\label{assum_pdelta}
Assume that there exists a function $\Phi$ such that, for all $\theta$ and $\mathbf{w},$
$$\phi_{\theta}(\mathbf{w})\leq \Phi(\mathbf{w}).$$
Let
\begin{eqnarray*}
E_{-}(s) &=& \left\{Y\leq s \; \text{and} \; T(\mathbf{W})\geq s\right\} , \\
E_{+}(s) &=&  \left\{Y>s \; \text{and} \; T(\mathbf{W})\leq s\right\},
\end{eqnarray*}
$F(s)=E_-(s)\cup E_+(s),$ and
$$p(s)=E\left[\max\left(\frac{s\Phi(\mathbf{w})}{Y}-1,1\right)\mathbf{1}_{F(s)}\right].$$
Assume that $\lim_{s\rightarrow \infty}p(s)=0.$
\end{assum}

Assumption \ref{assum_pdelta} is essentially here to reflect the quality of the trigger procedure. It is required to ensure that the sets $\{T(\mathbf{W})\geq s\}$ and $\{Y\geq s\}$ are approximately the same. Note that, in the case where $s\Phi(w)$ tends to stay much lower than $Y$ (which will usually be the case: the parametric product is expected to have lighter tail than $Y$ to facilitate risk management), the quantity $p(s)$ is close to $\mathbb{P}(F(s)),$ which somehow corresponds to "trigger failures."

\begin{theorem}
\label{th_approx}

Assume that $\sup_x |L'(x)|<\infty.$ Let $$\mathfrak{L}^*_{\theta}(s)=L(1-f(\pi_{\theta}(s)))E\left[1-S(s|\mathbf{W})\Phi_0(\phi_{\theta}(\mathbf{W}),\gamma(\mathbf{W}))
\right],$$
where
\begin{eqnarray*}\Phi_1(x,\gamma) &=& \int_0^x v^{1/\gamma}\varphi_1(v)dv, \\
\Phi_0(x,\gamma)&=&1-\varphi_0(x)+\Phi_1(x,\gamma).
\end{eqnarray*}
Under Assumption \ref{a-1} to \ref{assum_pdelta},
$$\mathfrak{L}_{\theta}(s)=\mathfrak{L}_{\theta}^*(s)(1+\mathfrak{R}_{\theta}(s))+O(p(s)),$$
with $\sup_{\theta}|\frak{R}_{\theta}(s)|\rightarrow 0$ when $s$ tends to infinity.

\end{theorem}

The proof is postponed to section \ref{sec:proof1}. By ensuring that $\mathfrak{L}_{\theta}(s)$ is asymptotically equivalent to $\mathfrak{L}^*_{\theta}(s),$ the idea is that optimizing $\mathfrak{L}^*_{\theta}(s)$ should approximately lead to the same pay-off function.
 One can note that 
\begin{equation}\mathfrak{L}^*_{\theta}(s)=E\left[\Psi(\pi_{\theta}(s),\phi_{\theta}(\mathbf{W});S(s|\mathbf{W}),\gamma(\mathbf{W}))\right].\label{Epsi}\end{equation}

where 
$$\Psi(\pi_{\theta}(s),\phi_{\theta}(\mathbf{W});S(s|\mathbf{W}),\gamma(\mathbf{W})) = L(1-f(\pi_{\theta}(s)))\left[1-S(s|\mathbf{W})\Phi_0(\phi_{\theta}(\mathbf{W}),\gamma(\mathbf{W}))
\right]$$

One can also observe, in $\mathfrak{L}^*_{\theta}(s),$ a separation between a term $L(1-f(\pi_{\theta}(s)))$ that materializes the effect of the price, and the term between brackets. The two terms drive $\mathfrak{L}$ in two opposite directions.

Another interesting feature of this decomposition is that we see that the whole joint distribution of $Y$ and $\mathbf{W}$ is not involved in $\Psi$: to compute the approximation of $\mathfrak{L}(s),$ one only needs to compute the conditional tail index $\mathbf{w}\rightarrow \gamma(\mathbf{w})$  and $\mathbf{w}\rightarrow S(s|\mathbf{w}),$ and to know the distribution of $\mathbf{W}.$ Essentially, one needs to estimate the probability of triggering the parametric compensation $s\phi_{\theta}(\mathbf{w})$ (which is easier to do than if we needed to estimate the whole conditional distribution of $Y$ above $s$), and since the information $\mathbf{W}$ is usually more available than precise claim data on $Y,$ this distribution can be estimated with a good accuracy (see section \ref{sec:alternative} below).

\subsection{Alternative optimization for the parametric part}\label{sec:alternative}

Let us consider that we have two databases. The first contains joint i.i.d. replications $(Y_i,\mathbf{W}_i)_{1\leq i \leq n}$ of $(Y,\mathbf{W}).$ The second contains i.i.d. replications $(\mathbf{W}_j)_{1\leq j \leq m}$ of $\mathbf{W}$ (the first $n$ observations are common to the two databases, although there is no obligation).

Theorem \ref{th_approx} encourages to rely on a two step procedure to optimize the index based cover:
\begin{itemize}
\item Step 1: from $(Y_i,\mathbf{W}_i)_{1\leq  i \leq n},$ estimate $w\rightarrow S(s|\mathbf{w})$ and $\mathbf{w}\rightarrow \gamma(\mathbf{w}).$ Let $\hat{S}(s|\mathbf{w})$ and $\hat{\gamma}(\mathbf{w})$ denote the corresponding estimators.
\item Step 2: Let us recall that $\hat{\pi}_{\theta}(s)=(1+\tau)n^{-1}\sum_{i=1}^n X_{\theta,i}(s).$ Let $$\hat{\mathfrak{L}}^*_{\theta}(s)=\frac{1}{m}\sum_{j=1}^m \Psi\left(\hat{\pi}_{\theta}(s),\phi_{\theta}(\mathbf{W}_j);\hat{S}(s|\mathbf{W}_j),\hat{\gamma}(\mathbf{W}_j)\right),$$
where $\Psi$ is defined in (\ref{Epsi}). Then, define
\begin{equation}\label{eq:hattheta}\hat{\tilde{\theta}}^*(s)=\arg \max_{\theta \in \Theta}\hat{\mathfrak{L}}^*_{\theta}(s).
\end{equation}
\end{itemize}

The heuristic is the following: data on both $Y$ and $\mathbf{W}$ are rare (and potentially expensive, see \cite{andre2013contribution}). They are used to provide estimators of $\hat{S}(s|\cdot)$ and $\hat{\gamma}(\cdot).$ The convergence rate of these estimators will be limited by the sample size $n.$ On the other hand, we have plenty of information on the covariates $\mathbf{W},$ which means that $m$ is large. The errors on $\hat{S}$ and $\hat{\gamma}$ are averaged in (\ref{eq:hattheta}), which increases the performance. This improvement will be limited, first by the bias introduced by the fact that $\hat{\mathfrak{L}}^*_{\theta}(s)$ is an empirical approximation of $\mathfrak{L}^*_{\theta}(s)$ instead of $\mathfrak{L}_{\theta}(s)$. This bias is supposed to be controlled by the fact that $s$ should be selected high enough. A second limitation will become apparent with the conditions of section \ref{sec:crate}, but the rate of convergence of $\hat{\tilde{\theta}}^*(s)$ will be, at worse, the one of $\hat{S}(s|\mathbf{w})$ and $\hat{\gamma}(\mathbf{w}).$

Regarding possible estimators of $\hat{S}(s|\cdot)$ and $\hat{\gamma}(\cdot),$ examples will be provided in section \ref{sec:real}. A logistic regression can for example be used for $\hat{S}(s|\cdot).$ However, tail regression techniques may be used to simultaneously estimate this probability and the parameter $\hat{\gamma},$ see for example \cite{farkas2021cyber} and \cite{farkas2024generalized} who rely regression trees. The next section is then devoted to providing theoretical ground to legitimate the use of (\ref{eq:hattheta}).

\subsection{Convergence rate for the statistical optimization}
\label{sec:crate}

We here give theoretical elements to support the convergence of $\hat{\tilde{\theta}}^*(s)$ towards $\tilde{\theta}^*(s),$ where
$$\tilde{\theta}^*(s)=\arg \max_{\theta\in \Theta}\mathfrak{L}^*_{\theta}(s).$$ Let us note that $\tilde{\theta}^*(s)$ is different from $\tilde{\theta}(s)$ due to the approximation that is done in $\mathfrak{L}^*_{\theta}(s),$ but Theorem \ref{th_approx} shows that this bias can be contained if $s$ is large enough. This convergence requires a set of standard assumptions which are the adaptation of standard in (semi-)parametric M-estimation. Essentially, they require:
\begin{itemize}
\item regularity of the functions $\theta\rightarrow \mathfrak{L}^*_{\theta}(s),$ which is essentially a matter of regularity of the function $L$ and $f,$ see Assumption \ref{a_uf} below;
\item the estimation technique used for $\hat{S}$ and $\hat{\gamma}$ should be consistent and not too "unstable". This is the meaning of the more technical Assumptions \ref{a_donsker} and \ref{a_donskerbis}, see also the discussion below.
\end{itemize}

Let us also stress that we are focusing on a situation where $k$, the dimension of $\Theta,$ is small compared to $n$ and $m,$ to rely on asymptotic theory (for $n$ and $m$ tending to infinity with $n$ potentially much smaller than $m$). Hyper-parametrized techniques like deep neural networks are clearly beyond the scope of these theoretical results.

We now list the assumptions required for Theorem \ref{th_theta}.

\begin{assum}
\label{a_uf}
We assume the following conditions:
\begin{enumerate}
\item the functions $|L|,$ $|L'|,$ $|f'|$ and $\varphi_0$ and $\varphi_1$ are bounded;
\item there exists a constant $M_0$ such that 
$$\sup_{\theta\in \Theta} \left|\frac{\partial \pi_{\theta}(s)}{\partial \theta}\right|\leq M_0;$$
\item there exists a function $\Lambda$ with
$E[\Lambda(\mathbf{W})]<\infty$ such that
$$\left|\frac{\partial \phi_{\theta}(\mathbf{w})}{\partial \theta}\right|\leq \Lambda(\mathbf{w}).$$
\item we have $\gamma_-=\inf_{\mathbf{w}\in \mathcal{W}}\gamma(\mathbf{w})\geq \frac{1}{\alpha},$ for some $\alpha>0,$ with  $$\int_0^{\infty} v^{\alpha}\varphi_1(v)dv<\infty.$$
\end{enumerate}
\end{assum}

One can link this assumption with Theorem 3.2.5 in \cite{van1996weak}. The idea is essentially to ensure that, if we knew exactly $S$ and $\gamma,$ the empirical M-estimation procedure would be $m^{1/2}-$consistent.

\begin{assum}
\label{a_donsker}
Let $\mathcal{W}$ be such that $\mathbb{P}(\mathbf{W}\in \mathcal{W})=1.$
We have
$$\sup_{\mathbf{w}\in \mathcal{W}}\left|\frac{\hat{S}(s|\mathbf{w})-S(s|\mathbf{w})}{1-S(s|\mathbf{w})}\right|+\sup_{\mathbf{w}\in \mathcal{W}}\left| \hat{\gamma}(\mathbf{w})-\gamma(\mathbf{w})\right|=O_P(\varepsilon_n),$$
with $\varepsilon_n$ tending to zero.
\end{assum}

Assumption \ref{a_donsker} is related to the consistency of the preliminary method of estimation of the conditional tail distribution of $Y.$ Convergence rates of this type can be found in \cite{farkas2024generalized} for the case of estimation via regression trees. 

Assumption \ref{a_donskerbis} is more technical, and is related to the concept of Donsker class, see \cite{van1996weak}, Chapter 2.1 for a definition. 

\begin{assum}
\label{a_donskerbis}
\begin{itemize}
\item[1.]
The functions $\mathbf{w}\rightarrow \hat{S}(s|\mathbf{w})$ (resp. $\mathbf{w}\rightarrow \hat{\gamma}(\mathbf{w})$) and $\mathbf{w}\rightarrow S(s|\mathbf{w})$ (resp. $\mathbf{w}\rightarrow \gamma(\mathbf{w})$) belong to a Donsker class of functions $\mathcal{H}_1$ (resp. $\mathcal{H}_2$). Moreover there exists a constant $C$ such that all functions in $\mathcal{H}_1$ are bounded by $C$.
\item[2.] The classes of functions $\left\{\mathbf{w}\rightarrow \phi_{\theta}(\mathbf{w}):\theta\in \Theta\right\}$ and $\left\{\mathbf{w}\rightarrow \partial_{\theta}\phi_{\theta}(\mathbf{w})/\partial\theta:\theta\in \Theta\right\}$ are Donsker. Moreover the second class is bounded.
\end{itemize}
\end{assum}

The intuition behind the concept of Donsker classes, is that the estimators $\hat{S}(s|\mathbf{w})$ and $\hat{\gamma}(\mathbf{w})$ do not evolve in too complex classes of functions (point 1), that would be too large for a uniform central limit property to hold. A problem could occur if the estimators take too elaborate forms, or are discontinuous. Typically, it will hold for differentiable functions (with bounded derivatives up to some order depending on the dimension of $\mathbf{w},$ see for example Theorem 2.7.1 in \cite{van1996weak} ), or piecewise constant functions. For point 2, it will essentially hold if $\phi_{\theta}(\mathbf{w})$ and its gradient with respect to $\theta$ are Lipschitz with respect to $\theta$ (see Theorem 2.7.11 in \cite{van1996weak}), and if the dimension of $\Theta$ can be considered as negligible with respect to $m.$

We now state the main result of this section.

\begin{theorem}
\label{th_theta}
Let

$$B_n=\int \nabla_{\theta}\Psi\left(\pi_{\tilde{\theta}^*}(s),\phi_{\tilde{\theta}^*(s)}(\mathbf{w});\hat{S}(s|\mathbf{w}),\hat{\gamma}(\mathbf{w})\right)d\mathbb{P}_{\mathbf{W}}(\mathbf{w}),$$

where $\mathbb{P}_{\mathbf{W}}$ is the distribution of $\mathbf{W},$
and
$$\Sigma(s)=\nabla^2_{\theta}\mathfrak{L}^*_{\tilde{\theta}^*}(s),$$
which is assumed to be invertible.
Under the Assumptions of Theorem \ref{th_approx} and Assumptions \ref{a_uf} to \ref{a_donskerbis},
$$\hat{\tilde{\theta}}^*(s)-\tilde{\theta}^*(s)=-\Sigma(s)^{-1}\left\{\frac{1}{m}\sum_{j=1}^m \nabla_{\theta}\Psi\left(\pi_{\tilde{\theta}^*}(s),\phi_{\tilde{\theta}^*(s)}(\mathbf{W}_j);S(s|\mathbf{W}_j),\gamma(\mathbf{W}_j)\right)\right\}+O_P(B_n)+o_P(m^{-1/2}).$$
\end{theorem}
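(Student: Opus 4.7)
The plan is to follow the classical $M$-estimator argument combined with plug-in/nuisance-parameter bookkeeping. As a first step, I would establish consistency $\hat{\tilde{\theta}}^*(s)\xrightarrow{P}\tilde{\theta}^*(s)$ via the uniform convergence $\sup_{\theta\in \Theta}\bigl|\hat{\mathfrak{L}}^*_{\theta}(s)-\mathfrak{L}^*_{\theta}(s)\bigr|\xrightarrow{P}0$; this combines Assumption \ref{a_donsker} (uniform consistency of $\hat{S}$ and $\hat{\gamma}$), Assumption \ref{a_donskerbis} (Glivenko--Cantelli/Donsker property of the indexing classes), and the regularity of $\Psi$ inherited from Assumption \ref{a_uf}, together with the standard argmax-continuity theorem (e.g. Theorem 5.7 in \cite{van1996weak}).

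Next, starting from the first-order condition $\nabla_{\theta}\hat{\mathfrak{L}}^*_{\hat{\tilde{\theta}}^*(s)}(s)=0$ and Taylor-expanding around $\tilde{\theta}^*(s)$, I obtain
$$0 = \nabla_{\theta}\hat{\mathfrak{L}}^*_{\tilde{\theta}^*(s)}(s) + \bigl[\nabla^2_{\theta}\hat{\mathfrak{L}}^*_{\bar{\theta}}(s)\bigr]\bigl(\hat{\tilde{\theta}}^*(s)-\tilde{\theta}^*(s)\bigr),$$
for some $\bar{\theta}$ between the two points. A uniform law of large numbers together with consistency gives $\nabla^2_{\theta}\hat{\mathfrak{L}}^*_{\bar{\theta}}(s)\xrightarrow{P}\Sigma(s)$, which is invertible by hypothesis, so only the score $\nabla_{\theta}\hat{\mathfrak{L}}^*_{\tilde{\theta}^*(s)}(s)$ remains to be analyzed.

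The crux of the argument is the following decomposition. Setting $g(\mathbf{w};S,\gamma):=\nabla_{\theta}\Psi\bigl(\pi_{\tilde{\theta}^*}(s),\phi(\tilde{\theta}^*(s),\mathbf{w});S(s|\mathbf{w}),\gamma(\mathbf{w})\bigr)$, one has
$$\frac{1}{m}\sum_{j=1}^{m}g(\mathbf{W}_j;\hat{S},\hat{\gamma})=\frac{1}{m}\sum_{j=1}^{m}g(\mathbf{W}_j;S,\gamma)+(\mathbb{P}_m-\mathbb{P})\bigl[g(\cdot;\hat{S},\hat{\gamma})-g(\cdot;S,\gamma)\bigr]+B_n,$$
using $\mathbb{P}[g(\cdot;S,\gamma)]=\nabla_{\theta}\mathfrak{L}^*_{\tilde{\theta}^*(s)}(s)=0$ from the optimality of $\tilde{\theta}^*(s)$. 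The middle empirical-process term is $o_P(m^{-1/2})$ by asymptotic equicontinuity: Assumption \ref{a_donskerbis} provides the Donsker property of the family, while Assumption \ref{a_donsker} forces the $L^2(\mathbb{P})$-seminorm of the increment to vanish. The same bookkeeping absorbs the perturbation $\hat{\pi}_{\tilde{\theta}^*}(s)-\pi_{\tilde{\theta}^*}(s)=O_P(n^{-1/2})$ appearing inside $\hat{\mathfrak{L}}^*$, which is itself absorbed into $B_n$ up to an $o_P(m^{-1/2})$ term. Inverting $\Sigma(s)$ then delivers the announced expansion.

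I expect the main obstacle to be this last step: showing that the empirical-process increment is genuinely $o_P(m^{-1/2})$ while the nuisance parameters $\hat{S}$ and $\hat{\gamma}$ are themselves random. One must verify that their law is eventually supported in a Donsker class satisfying envelope/entropy conditions compatible with the boundedness and integrability requirements of Assumption \ref{a_uf}, which is precisely what Assumption \ref{a_donskerbis} is tailored to provide. The clean separation between the deterministic bias $B_n$ and the $m^{-1/2}$ stochastic term is what ultimately makes the two-step procedure of Section \ref{sec:optimization} worthwhile, since $B_n$ typically inherits the slower rate $\varepsilon_n$ of the nuisance estimation while the average over $\mathbf{W}_j$'s enjoys the faster $m^{-1/2}$ scaling.
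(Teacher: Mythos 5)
Your proposal is correct and follows essentially the same route as the paper: consistency via uniform convergence of $\hat{\mathfrak{L}}^*_{\theta}(s)$ and the argmax theorem, then a Taylor expansion of the first-order condition, with the score at $\tilde{\theta}^*(s)$ split into the empirical mean of $\nabla_{\theta}\Psi(\cdot;S,\gamma)$ (centered by optimality), the bias term $B_n$, and an empirical-process increment killed by the asymptotic equicontinuity of the Donsker class from Assumption \ref{a_donskerbis} combined with the vanishing $L^2$-distance from Assumption \ref{a_donsker}. Your explicit remark about absorbing the $O_P(n^{-1/2})$ perturbation from $\hat{\pi}_{\theta}(s)$ is a point the paper's write-up passes over silently, so it is a welcome addition rather than a deviation.
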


The proof is provided in section \ref{sec:proofestim}.

In this decomposition, the empirical sum is tending to 0 at rate $m^{-1/2}.$
The term $B_n$ is tending to zero, since, from the definition of $\tilde{\theta}^*(s),$
$$\int \nabla_{\theta}\Psi\left(\pi_{\tilde{\theta}^*}(s),\phi_{\tilde{\theta}^*(s)}(\mathbf{w});S(s|\mathbf{w}),\gamma(\mathbf{w})\right)d\mathbb{P}_{\mathbf{W}}(\mathbf{w})=0,$$
and $\hat{S}$ and $\hat{\gamma}$ converge towards $S$ and $\gamma.$ The rate convergence will be at most $\varepsilon_n,$ but if $\nabla_{\theta}\Psi\left(\pi_{\tilde{\theta}^*}(s),\phi_{\tilde{\theta}^*(s)}(\mathbf{w});\hat{S}(s|\mathbf{w}),\hat{\gamma}(\mathbf{w})\right)$ is an unbiased estimator of its limit, $B_n=0$ thanks to the integration with respect to $\mathbb{P}_{\mathbf{W}}(\mathbf{w})$ (and if the bias is small, it can lead to a significant gain). This property will be confirmed by the empirical example studied in the next section.

\section{Empirical analysis on simulated data}
\label{sec:simul}

In this section, we will illustrate the theoretical results obtained so far on simulated data. The work in this paper applies to heavy-tailed losses, which are the focus of all the analyses in this section. Section \ref{sec:simul_set} presents the simulation setting, the preliminary manipulations of the data, and the analysis algorithm used. Section \ref{sec:simul_res} presents the results obtained from these analyses. 

\subsection{Setting} \label{sec:simul_set}
These analyses have two main objectives. The first is to perform an empirical comparison of the objective function $\mathfrak{L}_{\theta}$ and its approximation $\mathfrak{L}^*_{\theta}$ on empirical data for a given threshold $s$. The second is to empirically verify that a two-step estimation of the objective function, using the approximation $\mathfrak{L}_{\theta}^*$ and an influx of index data, could help improve the quality of the estimation.

For these analyses, we choose a framework with a one-dimensional index $\mathbf{W}$ and a heavy-tailed loss $Y$. $\mathbf{W}$ is assumed to follow a uniform distribution on $[0, 1]$. $Y$ is assumed to depend on the index $\mathbf{W}$ and to follow a Pareto distribution whose survival function is given by:

\begin{equation*}
    S(Y=t|\mathbf{W}=\mathbf{w}) = \mathbb{P}(Y>t|\mathbf{W}=\mathbf{w}) = \frac{1}{t^{\frac{1}{\gamma(\mathbf{w})}}}.
\end{equation*}

The link between $Y$ and $\mathbf{W}$ is established through the shape parameter $\gamma(\mathbf{W})$, which is defined as:

\begin{equation}\label{eq:simul_shape}
    \gamma(\mathbf{w}) = e^{-a - b\mathbf{w}},
\end{equation}

where $a$ and $b$ are parameters chosen such that $\gamma(\mathbf{W}) \in [0.2, 0.7]$ for all $W \in [0, 1]$. 

The threshold $s$ is chosen as the 85$^{\text{th}}$ percentile of the losses $Y$. In other words, the proposed index reinsurance is applied to the top 15\% of losses. Moreover, the loading factor $\tau$ is 10\%. The family of payoff functions $\phi_{\theta}(\mathbf{W})$ is defined as:
\begin{equation*}
    \phi_{\theta}(\mathbf{w}) = \frac{1}{s}\max\{\min[\mathbf{E}(Y|Y>s,\mathbf{W}=\mathbf{w}), e^{\theta \mathbf{w}}],s\}.
\end{equation*}

In this expression, the minimum between $\mathbf{E}(Y|Y>s,\mathbf{W}=\mathbf{w})$ and $e^{\theta \mathbf{w}}$ is used to ensure there is no overcompensation, at least on average. In practice, this adjustment could be modified if it results in an unacceptable proportion of under-compensation. Such modifications could involve adding a loading parameter to $\mathbf{E}(Y|Y>s,\mathbf{W}=\mathbf{w})$ or adapting the variable part $e^{\theta \mathbf{w}}$ and the domain of $\theta$ to implicitly account for and prevent overcompensation. The maximum between the previous expression and $s$ is used to guarantee that there is no under-compensation above $s$. This ensures that a policyholder who suffers a loss greater than $s$ receives at least $s$ as compensation.

The optimal value of $\theta$ is therefore obtained under the constraint that, for losses above $s$, the policyholder or the cedent (in the case of index reinsurance) receives compensation of at least $s$ and at most $\mathbf{E}(Y|Y>s,\mathbf{W}=\mathbf{w})$ (which is finite since $\gamma(\mathbf{w})$ stays below 1).

We consider $L(x) = L_2(x)=-\exp(-\mu x),$ where $\mu$ is set to $1.5$. For the price aversion function, we consider $f= f_r$, where $f_r$ is as defined in section \ref{sec:choicef}, with $\kappa = 1.415$ and $\beta = 1.65$. Note that, in practice, these values would be calibrated using information from the portfolio and the policyholders.

\subsection{Results} \label{sec:simul_res}

Figure \ref{fig:util_payout} presents an analysis of the objective function, $\hat{\mathfrak{L}}$\footnote{$\hat{\mathfrak{L}}_\theta(s)=\frac{1}{m}\sum_{i=1}^{m}\left[L\left(\frac{Y_i\mathbf{1}_{T(\mathbf{W}_i) \leq s}+s\phi_\theta(\mathbf{W}_i)\mathbf{1}_{T(\mathbf{W}_i)>s}}{Y_i}-f(\hat{\pi}_{\theta}(s))\right)\right]$}, and its approximation $\hat{\mathfrak{L}}^*$\footnote{$\hat{\mathfrak{L}}^*_{\theta}(s)=\frac{1}{m}\sum_{j=1}^m \Psi\left(\hat{\pi}_{\theta}(s),\phi_{\theta}(\mathbf{W}_j);\hat{S}(s|\mathbf{W}_j),\hat{\gamma}(\mathbf{W}_j)\right)$}, on the full sample of size $m=5000$. Plots of $\hat{\mathfrak{L}}$ and $\hat{\mathfrak{L}}^*$ are shown in panel (a). These curves will serve as references in our subsequent analyses. We observe that the curves of $\hat{\mathfrak{L}}$ and $\hat{\mathfrak{L}}^*$ are nearly identical, which empirically confirms the validity of the using $\hat{\mathfrak{L}}^*$ instead of $\hat{\mathfrak{L}}$. The slight difference between the two curves can be attributed mainly to the error introduced by the approximation of the integral present in the expression of $\hat{\mathfrak{L}}^*$ (see Theorem \ref{th_approx}).

\begin{figure}[!h]
    \centering
    \begin{subfigure}[b]{0.5\textwidth}
        \centering
        \includegraphics[width=\linewidth]{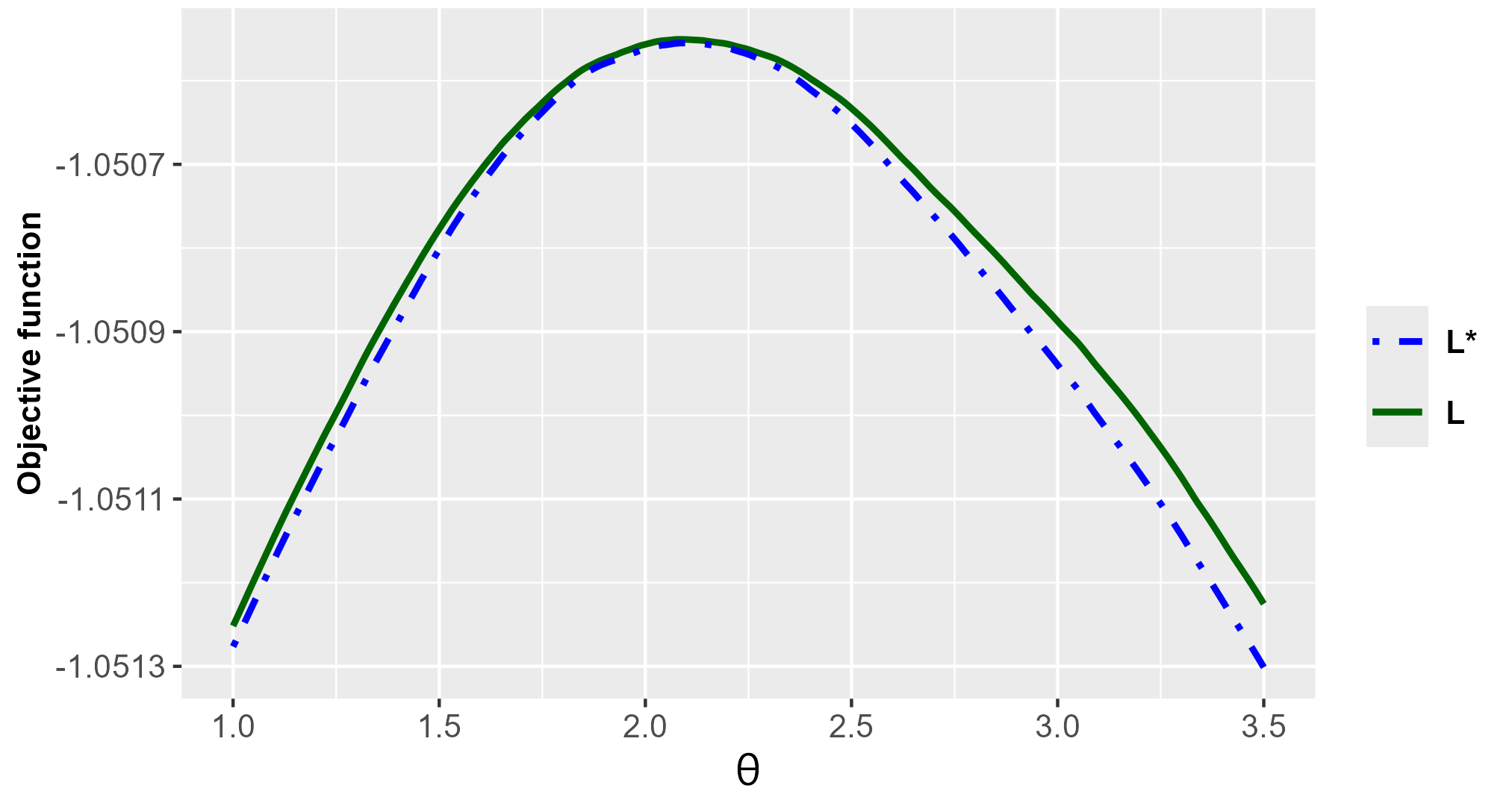}
        \caption{$\mathfrak{L}$ and $\mathfrak{L}^*$ on whole sample}
    \end{subfigure}%
    \begin{subfigure}[b]{0.5\textwidth}
        \centering
        \includegraphics[width=\linewidth]{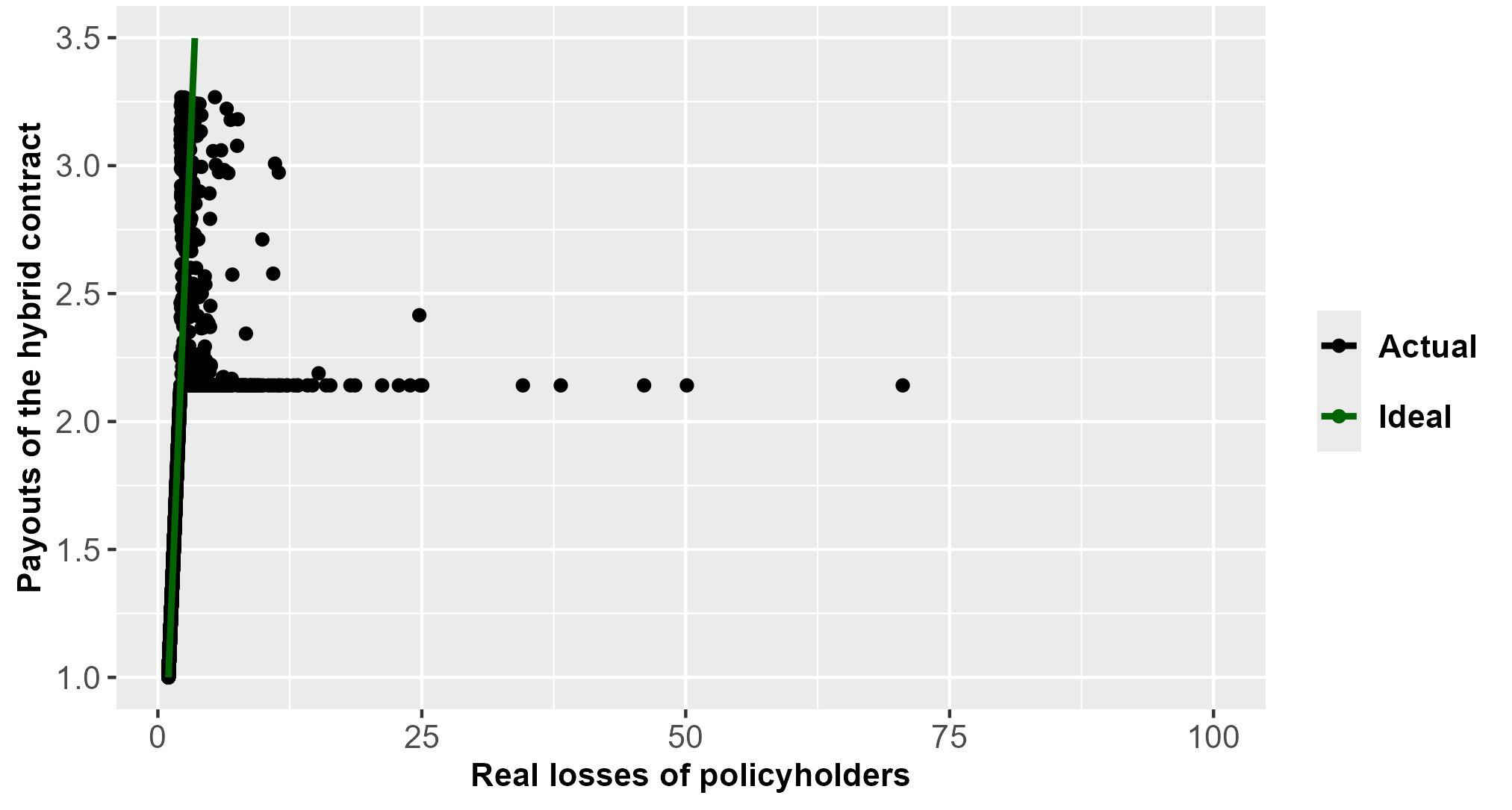}
        \caption{Payout for optimal $\theta$}
    \end{subfigure}
    \\[5pt]
    \caption{Panel (a) shows the objective functions for the whole sample of size $m$ as a function of the parameter $\theta$ for both $\hat{\mathfrak{L}}$ and $\hat{\mathfrak{L}}^*$. Panel (b) shows the payout function $\phi_{\hat{\tilde{\theta}}}$ as a function of the real losses of policyholders. $\hat{\tilde{\theta}}$ is the optimal value of $\theta$ obtained from the maximization of $\hat{\mathfrak{L}}_\theta$.}
    \label{fig:util_payout}
\end{figure}

Panel (b) shows the payoff function $\phi_{\theta}$ for the optimal value of $\theta$ as a function of the actual losses of the policyholders. The ideal payout function, which corresponds to the exact loss $Y$, is also plotted. We observe that the limits $s$ and $\mathbf{E}(Y|Y>s,\mathbf{W}=\mathbf{w})$ are quite effective, as they seem to successfully prevent undercompensation and overcompensation, respectively. The optimal value of $\theta$, which maximizes the empirical objective functions, produces a payout close to the ideal payout. This is evidenced by the clustering of the payouts above $s$ around the line $\phi_{\theta} = Y$.

The deviations from the ideal payout are not necessarily problematic, as they will be accounted for in the price of the insurance contract, ensuring that policyholders pay exactly for the amount of coverage they receive. Of course, this is guaranteed only if the index model is sufficiently accurate.

The next part of our analysis consists of studying and illustrating the advantages of a two-step optimization of $\theta$ using $\mathfrak{L}^*$. For this, according to the setting of section \ref{sec:alternative}, we consider a situation where we have a sample $(Y_i, \mathbf{W}_i)_{1\leq i \leq n}$ of size $n$ (where we will make this size $n$ vary) and a larger sample $(\mathbf{W}_j)_{1\leq j \leq m}$ of size $m > n$. Algorithm \ref{alg:simul} outlines the steps followed in this analysis.

\begin{algorithm}
    \caption{Methodology of analysis}\label{alg:simul}
    \begin{algorithmic}[1]
    \State $m \gets \text{total sample size}$
    \State $n \gets \text{a starting sample size}$
    \State $k \gets \text{an increment}$
    \State $\hat{\mathfrak{L}}_{ref}(\theta) \gets \frac{1}{m}\sum_{i=1}^{m}\left[L\left(\frac{Y_i\mathbf{1}_{T(\mathbf{W}_i) \leq s}+s\phi_\theta(W_i)\mathbf{1}_{T(\mathbf{W}_i)>s}}{Y_i}-f(\hat{\pi}_{\theta}(s))\right)\right]$
    \Repeat
        \State $\hat{\mathfrak{L}}_{n}(\theta) \gets \frac{1}{n}\sum_{i=1}^{n}\left[L\left(\frac{Y_i\mathbf{1}_{T(\mathbf{W}_i) \leq s}+s\phi_\theta(\mathbf{W}_i)\mathbf{1}_{T(\mathbf{W}_i)>s}}{Y_i}-f(\hat{\pi}_{\theta}(s))\right)\right]$
        \State $\text{Estimate } \hat{a}_n \text{ and } \hat{b}_n \text{ using } (Y_i,W_i)_{1 \leq i \leq n} \text{ to obtain } \hat{\gamma}(W) \text{ and } \hat{S}(s|W)$
        \State $\hat{\mathfrak{L}}_{n}^*(\theta) \gets L\left(1 - f(\hat{\pi}_{\theta}(s))\right)\frac{1}{m}\sum_{i=1}^{m}\left[1 - \hat{S}(s|W_i)\Phi_{0}\left(\phi_{\theta}(W_i), \hat{\gamma}(W_i)\right)\right]$
        \State $Error_n \gets \|\hat{\mathfrak{L}}_n, \hat{\mathfrak{L}}_{ref}\|_\infty = \sup_\theta|\hat{\mathfrak{L}}_n(\theta)-\hat{\mathfrak{L}}_{ref}(\theta)|$
        \State $Error_n^* \gets \|\hat{\mathfrak{L}}^*_n, \hat{\mathfrak{L}}_{ref}\|_\infty =  \sup_\theta|\hat{\mathfrak{L}}^*_n(\theta)-\hat{\mathfrak{L}}_{ref}(\theta)|$
        \State $\text{store } Error_n \text{, } Error_n^* \text{ and the parameters } \hat{a}_n \text{ and } \hat{b}_n$
        \State $n \gets n+k$
    \Until{$n=m$}
    \end{algorithmic}
\end{algorithm}

\begin{figure}[!h]
    \centering
    \begin{subfigure}[b]{0.5\textwidth}
        \centering
        \includegraphics[width=\linewidth]{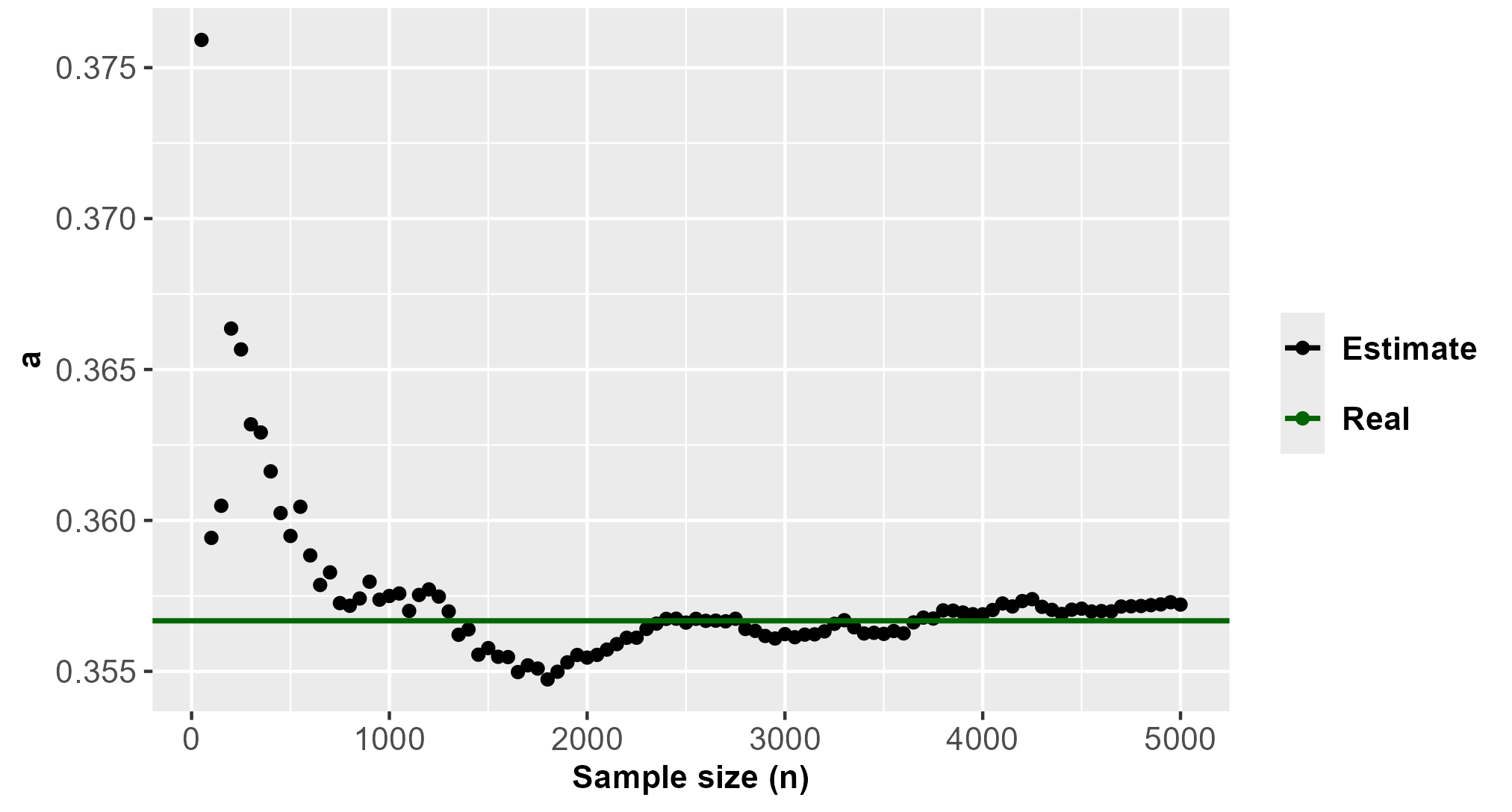}
        \caption{Estimates of the parameter $a$}
    \end{subfigure}%
    \begin{subfigure}[b]{0.5\textwidth}
        \centering
        \includegraphics[width=\linewidth]{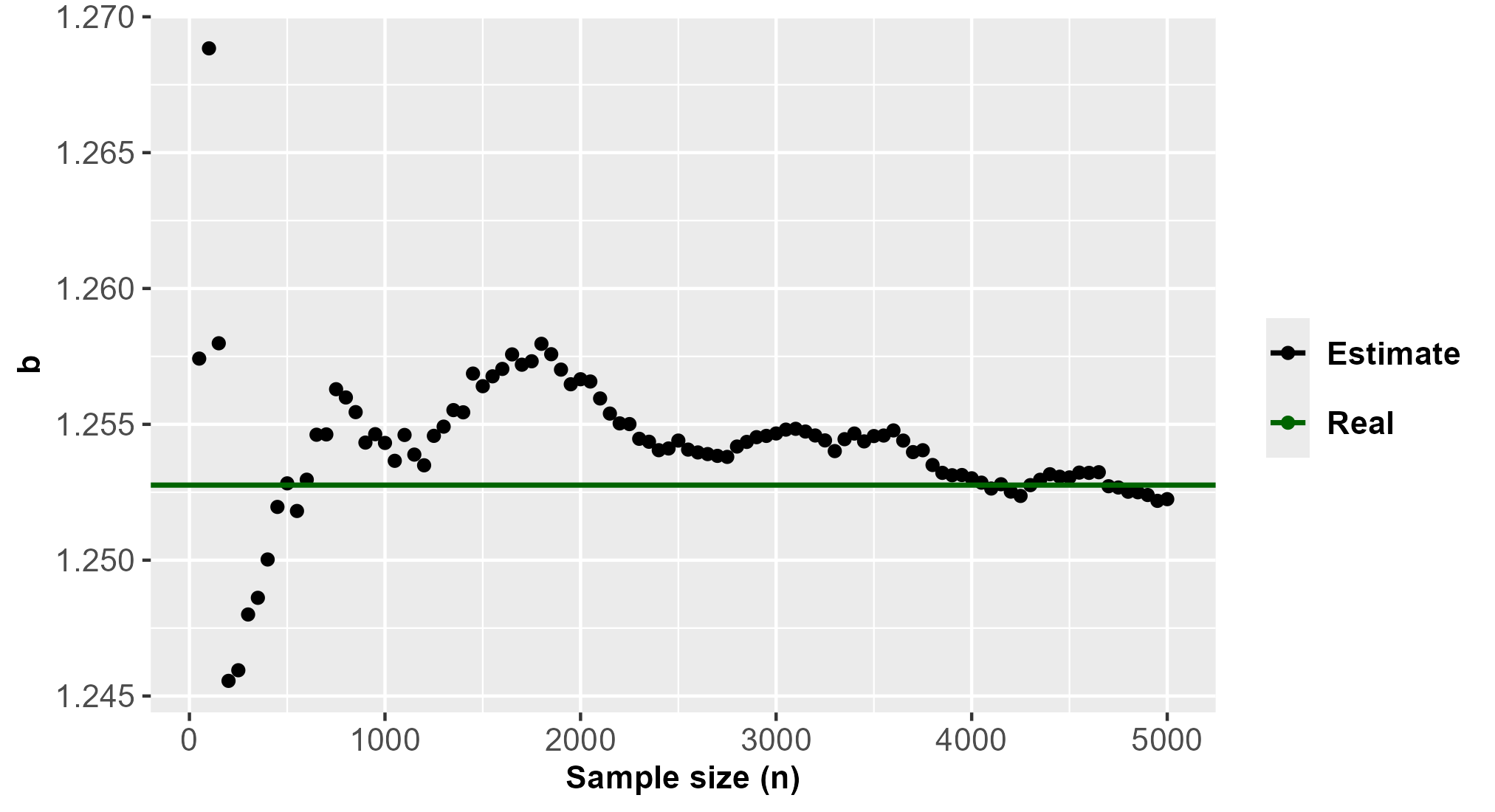}
        \caption{Estimates of the parameter $b$}
    \end{subfigure}
    \\[5pt]
    \caption{Evolution of the estimates of the parameters $a$ (panel (a)) and $b$ (panel (b)) of equation \ref{eq:simul_shape} with sample size.}
    \label{fig:param_simul}
\end{figure}

In step 7 of this algorithm, we use the smaller sample $(Y_i, \mathbf{W}_i)_{1\leq i \leq n}$ to estimate the parameters $a$ and $b$ of equation \ref{eq:simul_shape}. These estimated parameters are sufficient to define the functions $\mathbf{w} \mapsto \gamma(\mathbf{w})$ and $\mathbf{w} \mapsto S(s|\mathbf{w})$ from (\ref{eq:simul_shape}) and the definition of $S(t|\mathbf{w})$ from section \ref{sec:simul_set}. In step 8, these two functions are incorporated into the expression of $\hat{\mathfrak{L}}^*$, which does not depend on $Y$ or its entire distribution. $\hat{\mathfrak{L}}^*$ is an estimate of $\mathfrak{L}^*$ which is calculated using the larger sample $(\mathbf{W}_i)_{1\leq i \leq m}$.

Figure \ref{fig:param_simul} illustrates the evolution of the estimates of the parameters $a$ and $b$ as a function of the sample size $n$. As expected, as the size of the sample used for estimation increases, the estimates converge to the true values of these parameters. These true values, which serve as references, are estimated using the full sample of size $m$. 

\begin{figure}[!h]
    \centering
    \begin{subfigure}[b]{0.5\textwidth}
        \centering
        \includegraphics[width=\linewidth]{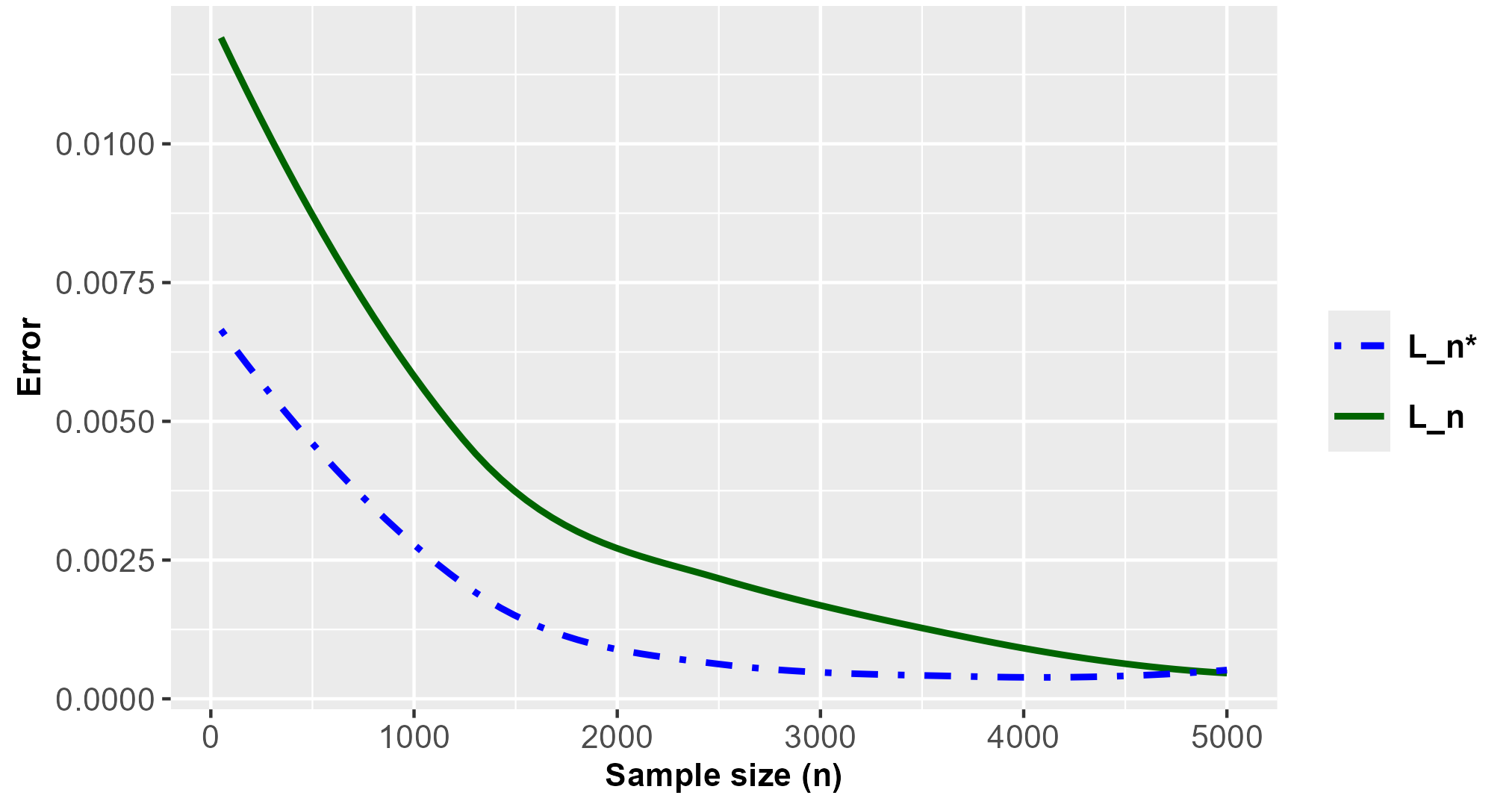}
        \caption{Errors in estimating $\hat{\mathfrak{L}}$ by $\hat{\mathfrak{L}}_n$ and $\hat{\mathfrak{L}}^*_n$.}
    \end{subfigure}%
    \begin{subfigure}[b]{0.5\textwidth}
        \centering
        \includegraphics[width=\linewidth]{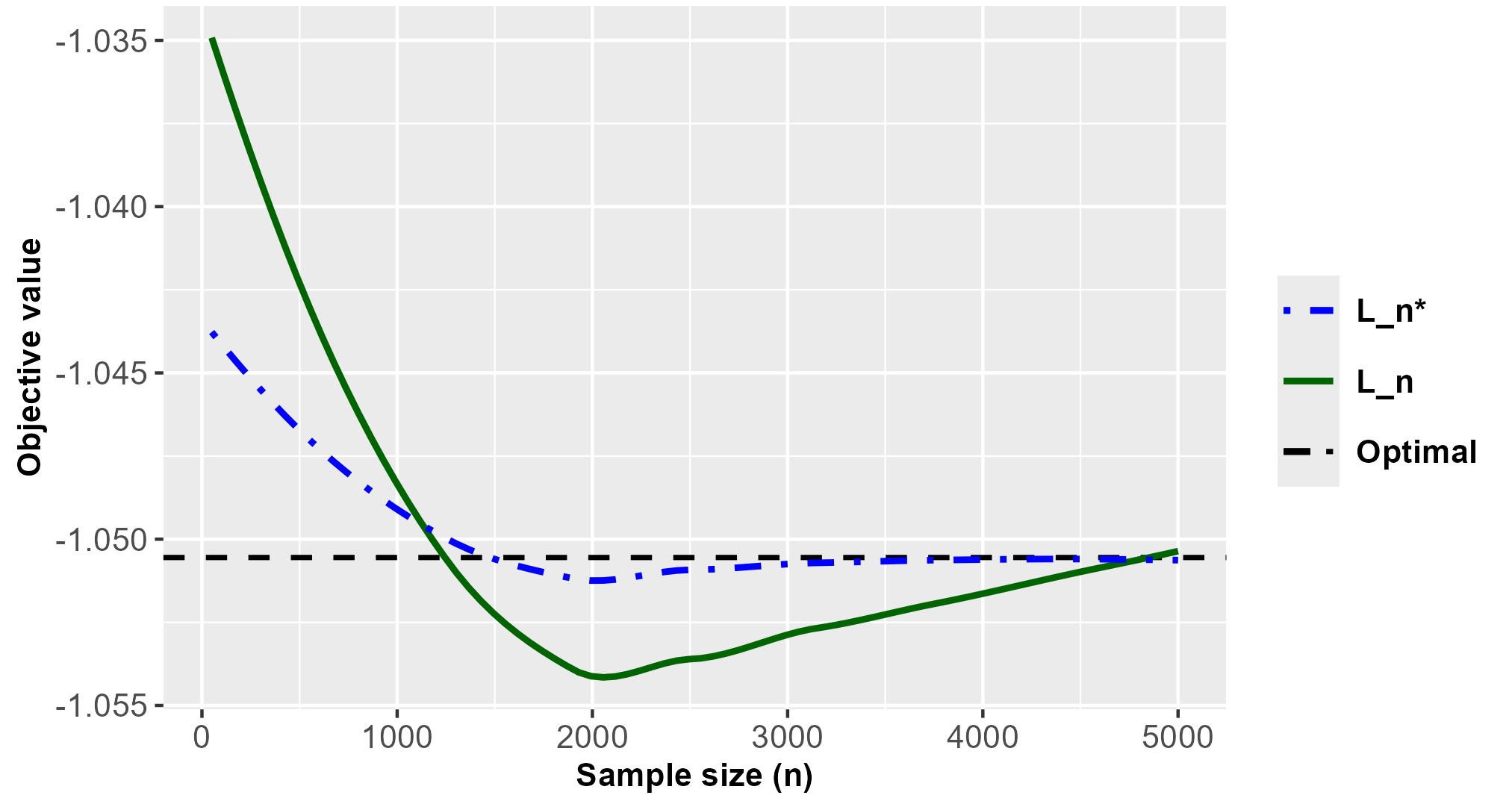}
        \caption{Optimal values of $\hat{\mathfrak{L}}_n$, $\hat{\mathfrak{L}}^*_n$ and $\hat{\mathfrak{L}}$.}
    \end{subfigure}
    \\[5pt]
    \caption{Panel (a) shows the errors incurred in the estimation of $\hat{\mathfrak{L}}$ by $\hat{\mathfrak{L}}_n$ (in one step) and $\hat{\mathfrak{L}}^*_n$ (in two steps) as a function of $n$ (see steps 7 and 8 of algorithm \ref{alg:simul}). Panel (b) shows the objective values obtained through $\hat{\mathfrak{L}} = \hat{\mathfrak{L}}_m$, $\hat{\mathfrak{L}}_n$, and $\hat{\mathfrak{L}}^*_n$. The evolution of the latter two is presented as a function of $n$.}
    \label{fig:error_simul}
\end{figure}

The results obtained by applying algorithm \ref{alg:simul} are presented in figure \ref{fig:error_simul}. In panel (a), we observe a decrease in the estimation errors of $\hat{\mathfrak{L}}$ by $\hat{\mathfrak{L}}_n$ (one-step estimation) and $\hat{\mathfrak{L}}^*_n$ (two-step estimation) as the sample size of $(Y_i, \mathbf{W}_i)_{1\leq i \leq n}$ increases. We also observe that the two-step estimation of the objective function using $\hat{\mathfrak{L}}^*_n$ results in a lower estimation error compared to the one-step estimation using $\hat{\mathfrak{L}}_n$. This result supports the conclusion that in situations with limited information on economic losses, the proposed methodology can improve the estimation of the optimal index insurance model parameters $\theta$ by leveraging a larger sample of index values.

In practice, beyond addressing the scarcity of economic loss data for emerging risks, this approach could also help in accounting for the future evolution of index values in current estimations, provided reliable projections of index values are available. Such an approach could significantly enhance the risk management strategies of index insurers and reinsurers.

The efficiency of the two-step approach using $\hat{\mathfrak{L}}^*$ is further corroborated by panel (b) of figure \ref{fig:error_simul}. Specifically, the objective values calculated in two steps using $\hat{\mathfrak{L}}^*_n$ are closer to, and converge faster toward, the reference objective value compared to the objective values calculated in one step using $\hat{\mathfrak{L}}_n$. 

In addition to reducing the number of random variables and the amount of information required to estimate the objective function $\hat{\mathfrak{L}}^*$, there may be an additional practical advantage to using $\hat{\mathfrak{L}}^*$. In a highly favorable scenario where the insurance or reinsurance company possesses a very large dataset on $Y$ and $\mathbf{W}$, estimating and optimizing $\hat{\mathfrak{L}}$ directly on this large dataset could be computationally expensive. In such a case, an alternative solution could involve estimating $\hat{\gamma}(\mathbf{W})$ and $\hat{S}(s|\mathbf{W})$ on a smaller subset of the dataset, incorporating these estimates into $\hat{\mathfrak{L}}^*$, and then performing the optimization using only $\mathbf{W}$.

\section{Empirical analysis on real data}
\label{sec:real}

In this section, we present the results of our analysis using real data. We consider tornado loss data from the United States covering the period 2016 to 2023. Section \ref{sec:real_set} describes this dataset and the setting used for our analysis, while section \ref{sec:rela_res} presents the results obtained.

\subsection{Data and setting}\label{sec:real_set}

The tornado losses dataset\footnote{Available at https://www.spc.noaa.gov/wcm//\#data} used in this study is identical to the one analyzed by \cite{daouia2023inference}. In this work, the authors confirmed the heavy-tailed nature of these losses over the period 2010 to 2016 by fitting a Generalized Pareto Distribution. We validate the heavy-tailed nature of the loss data by examining the plots presented in figure \ref{fig:ev_analysis}. The Hill plot shows an initial high variance, followed by a stable region and then a deviation. The mean excess plot is globally strictly increasing, and the QQ-plot indicates that the tail of the loss data is much heavier than that of the exponential distribution. Finally, we perform a Cramér–von Mises goodness-of-fit test, which fails to reject the null hypothesis of a Generalized Pareto Distribution, with a p-value of 0.902.

The variables of interest in this dataset include tornado monetary losses ($loss$), the starting and ending longitude and latitude of each tornado ($slon$, $elon$, $slat$, and $elat$), and the length and width of the area affected by the tornado ($len$ and $wid$). To address issues related to the scale of the losses, we work with losses normalized per unit area, defined as:

\begin{equation*}
    Y = \frac{loss}{len \times wid}.
\end{equation*}

Similarly to \cite{daouia2023inference}, we chose the tornado's average geographical position as indices, defined as:

\begin{equation*}
    \mathbf{W}^{(1)} = \frac{slat + elat}{2}, \text{ and } \mathbf{W}^{(2)} = \frac{slon + elon}{2},
\end{equation*}
and consider $\mathbf{W}=(\mathbf{W}^{(1)}, \mathbf{W}^{(2)}).$ Including additional indices and covariates could make sense in the practical design of an index insurance or reinsurance product for tornado losses. However, due to the lack of data and to ensure the simplicity of our analysis, we limit ourselves to this two dimensional vector of covariates.

Our final sample $(Y_i, \mathbf{W}_i)_{1\leq i\leq m}$ contains $m=4659$ observations and covers the entire USA. In particular, our dataset includes the costliest tornado, which occurred on October 20, 2019, causing EF3 damage in the Dallas suburbs of Texas and resulting in \$1.5 billion in losses. Note that, in this example, we do not have two samples, one with $(Y,\mathbf{W})$ and one with $\mathbf{W}$ only, but we will artificially remove some values of $Y_i$ to measure the performance of the approximation $\mathfrak{L}^*.$

Figure \ref{fig:data_real} presents the tornado losses per unit area over the study period. We distinguish between losses below the 85$^{\text{th}}$ percentile, which corresponds to our choice of $s$, and losses above this threshold. Despite a concentration of losses in the eastern regions, extreme tornadoes (losses above the 85$^{\text{th}}$ percentile) occur across nearly all states in the US. This observation leads to the broader conclusion that no geographical location is entirely immune to extreme losses.

\begin{figure}[!h]
    \centering
    \includegraphics[width=\linewidth]{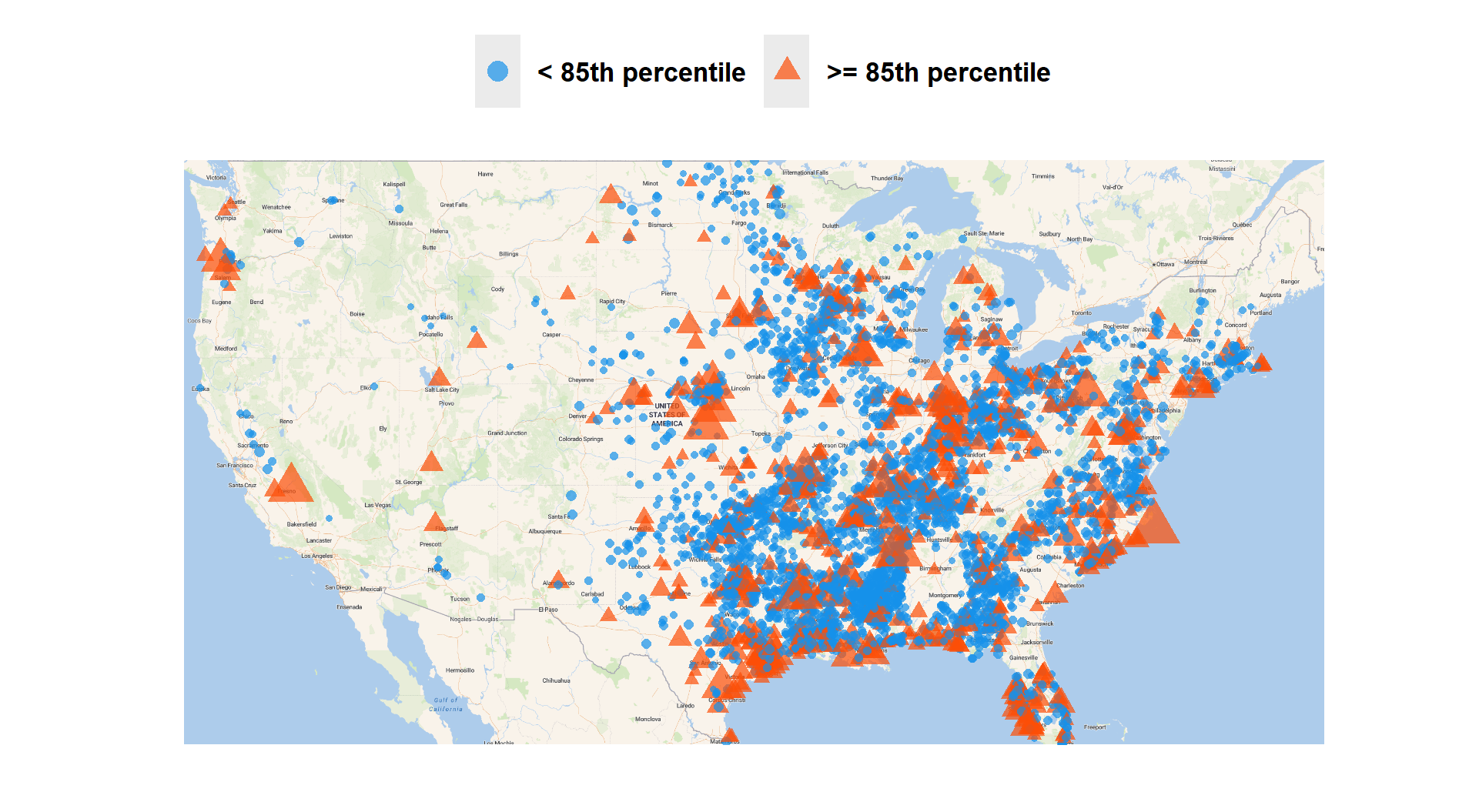}
    \vspace{-1.5cm}
    \caption{Tornado losses per unit area at the starting locations of the tornadoes. A distinction is made between losses per unit area below and above the 85$^{\text{th}}$ percentile, which is the value chosen for $s$.}
    \label{fig:data_real}
\end{figure}

Similarly to \cite{daouia2023inference}, and following the analysis of the data distribution described above, we assume a Generalized Pareto Distribution (GPD) for the tornado losses per unit area, with a dependence between these losses and the index values, such that:
\begin{equation*}
    S(Y=t|\mathbf{W}=\mathbf{w}) =\frac{1
    }{ \left(1+\frac{t \gamma(\mathbf{w})}{\sigma(\mathbf{w})}\right)^{\frac{1}{\gamma(\mathbf{w})}}}.
\end{equation*}

Where $\sigma$ is the scale parameter and $\gamma$ the shape parameter. $\gamma$ and $\sigma$ establish the statistical link between the losses $Y$ and the indices $\mathbf{W}^{(1)}$ and $\mathbf{W}^{(2)}$, through the equations:
\begin{equation}\label{eq:gamma_real}
    \begin{aligned}
        \gamma(\mathbf{w}) &= e^{-a-b\mathbf{w}^{(1)}-c\mathbf{w}^{(2)}} - 0.5,\\
        \sigma(\mathbf{w}) &= e^{-d-e\mathbf{w}^{(1)}-f\mathbf{w}^{(2)}}
    \end{aligned}
\end{equation}
where $a$, $b$, $c$, $d$, $e$ and $f$ are parameters to be estimated from the data. To validate this conditional GPD setting, we fit a GPD on the whole dataset with $\gamma$ and $\sigma$ defined by equation \ref{eq:gamma_real}. The results of this fit are presented in table \ref{tab:fitgpd} in the appendix. We perform a new Cramér--von Mises goodness-of-fit test, which fails to reject the null hypothesis of a conditional GPD, with a p-value of 0.5015. As shown in table \ref{tab:fitgpd}, all the fitted parameters ($\hat{a}$, $\hat{b}$, $\hat{c}$, $\hat{d}$, $\hat{e}$ and $\hat{f}$) are highly significant. Furthermore, a deviance comparative analysis between the fitted conditional GPD and a GPD with no covariates shows that the index $\mathbf{W}$ significantly contributes to explaining the variability of the tornado losses $Y$, including in the tails. This conclusion is also supported by the AIC values (more details can be found in table \ref{tab:deviancegpd} in the appendix).

As mentioned earlier, $s$ is chosen as the 85$^{\text{th}}$ percentile of the tornado losses per unit area. In practice, other choices of $s$ could be made depending on the setting and context without hindering the applicability of the proposed methodology.

The family of payoff functions is defined as:
\begin{equation*}
    \phi_{\theta_1, \theta_2}(\mathbf{w}) = \frac{1}{s}\max\{\min[q_{0.5}(Y|Y>s,\mathbf{W}=\mathbf{w}), \theta^{(1)} \mathbf{w}^{(1)} + \theta^{(2)} \mathbf{w}^{(2)}],s\}
\end{equation*}

where $q_{0.5}(Y | Y > s, \mathbf{W} = \mathbf{w})$ is the conditional median of losses per unit area above $s$. The expectation of $Y$ given $\mathbf{W}=\mathbf{w}$ according to the model is given by $\sigma(\mathbf{w})/(1-\gamma(\mathbf{w}))$. We do not directly use this form but prefer a linearized relationship between the payoff and $\mathbf{W}$ to avoid explosion of compensations for extremely heavy-tailed losses induced by the exponential relationship of equation \ref{eq:gamma_real}. As with the simulated data discussed in section \ref{sec:simul_set}, the conditional median is added to the payoff to reduce overcompensation. The choice of the conditional median, unlike the conditional expectation used for the simulated data, is motivated by the fact that, for certain tornado locations, the shape parameter may be greater than 1, rendering the conditional expectation undefined. Moreover, even restricting ourselves to areas where the expectation is finite, classical regression approaches (typically based on the empirical mean) are particularly unstable due to the heavy tail nature of the loss.

The choice of the median could be adapted in practice depending on the structure of the index model and the distribution of losses, as long as the selected statistic prevents overcompensation and limits under-compensation to an acceptable level. The justification of the general structure of the payout function is similar to that provided in section \ref{sec:simul_set} for simulated data.

The values of $\mu$, the parameters of the function $f$, and the loading factor of the premium are identical to those used for the simulated data. 

\subsection{Results}\label{sec:rela_res}

Panel (a) of Figure \ref{fig:res_eu_real} shows a plot of $\hat{\mathfrak{L}}^*$ with respect to $\theta_1$ and $\theta_2$ for $s = q_{0.85}(Y)$. We observe a strict concavity of $\hat{\mathfrak{L}}^*$ with respect to $\theta_1$ and $\theta_2$, suggesting the existence of an optimal combination $(\theta_1, \theta_2)$ that maximizes the objective function. Panel (b) of this figure shows plots of $\hat{\mathfrak{L}}^*$ for values of $s$ in ${q_{0.84}(Y), q_{0.85}(Y), q_{0.86}(Y)}$. We observe that as the value of the quantile increases, meaning as we focus on more extreme losses, the value of the objective function appears to increase.

This behavior is due to the implicit assumption that all losses below $s$ are fully and efficiently covered by indemnity-based insurance at the same cost as index insurance. In practice, this is not always the case, as indemnity-based insurance is not perfectly accurate. Furthermore, if we account for delays in compensation and high costs associated with indemnity-based insurance for extreme losses, it is reasonable to expect that, in practice, indemnity insurance would not always be preferred by policyholders for high or extreme losses.

\begin{figure}[!h]
    \centering
    \begin{subfigure}[b]{0.5\textwidth}
        \centering
        \includegraphics[width=\linewidth]{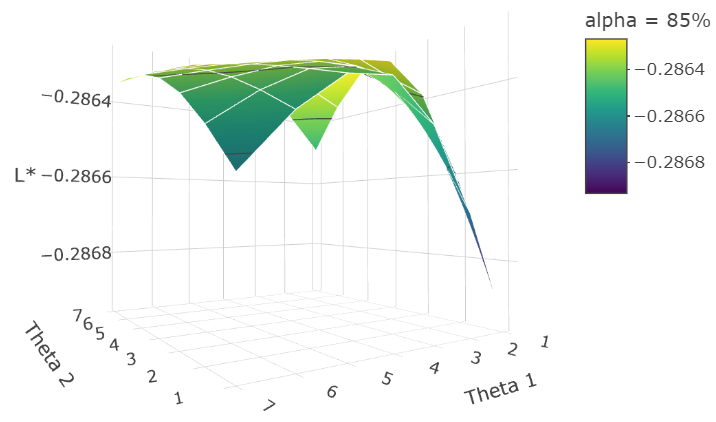}
        \caption{$\hat{\mathfrak{L}}^*$ for $s=q_{0.85}(Y)$}
    \end{subfigure}%
    \begin{subfigure}[b]{0.5\textwidth}
        \centering
        \includegraphics[width=\linewidth]{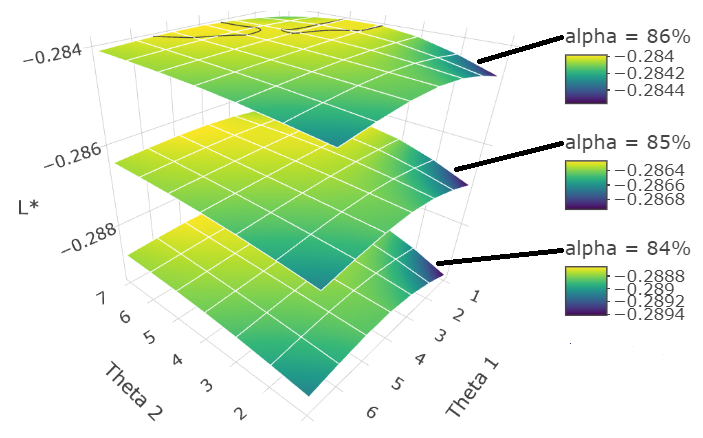}
        \caption{$\hat{\mathfrak{L}}^*$ for $s\in \{q_{0.84}(Y), q_{0.85}(Y), q_{0.86}(Y)\}$}
    \end{subfigure}
    \\[5pt]
    \caption{Objective function $\hat{\mathfrak{L}}^*$ with respect to $\theta_1$ and $\theta_2$ for $s=q_{0.85}(Y)$ (panel (a)) and for $s\in \{q_{0.84}(Y), q_{0.85}(Y), q_{0.86}(Y)\}$ (panel (b)).}
    \label{fig:res_eu_real}
\end{figure}

We pursue the analysis on real data by applying an algorithm similar to algorithm \ref{alg:simul}. Step 7 of this algorithm is slightly modified, as the three parameters $a$, $b$, and $c$ of equation \ref{eq:gamma_real} are estimated in this case. The evolutions of these estimates with respect to sample size are shown in Figure \ref{fig:param_real} in the Appendix.

We observe a significant divergence in the parameter estimates for small sample sizes, underscoring the critical importance of data availability in constructing efficient index insurance models. However, the estimates stabilize and become acceptable at approximately half of the total sample size. This finding supports our proposed two-step approach to optimizing the objective function, even in scenarios with limited data.

For instance, in a real life setting similar to this example, estimating the distribution parameters using half of the available data in the first step would yield acceptable estimates of $\hat{S}(s|W)$ and $\hat{\gamma}(W)$. These estimates would then enable an efficient calculation of the objective function in the second step, facilitating the development of an index insurance model in a situation where half of the needed loss data $Y$ is unavailable. 

\begin{figure}[!h]
    \centering
    \begin{subfigure}[b]{0.5\textwidth}
        \centering
        \includegraphics[width=\linewidth]{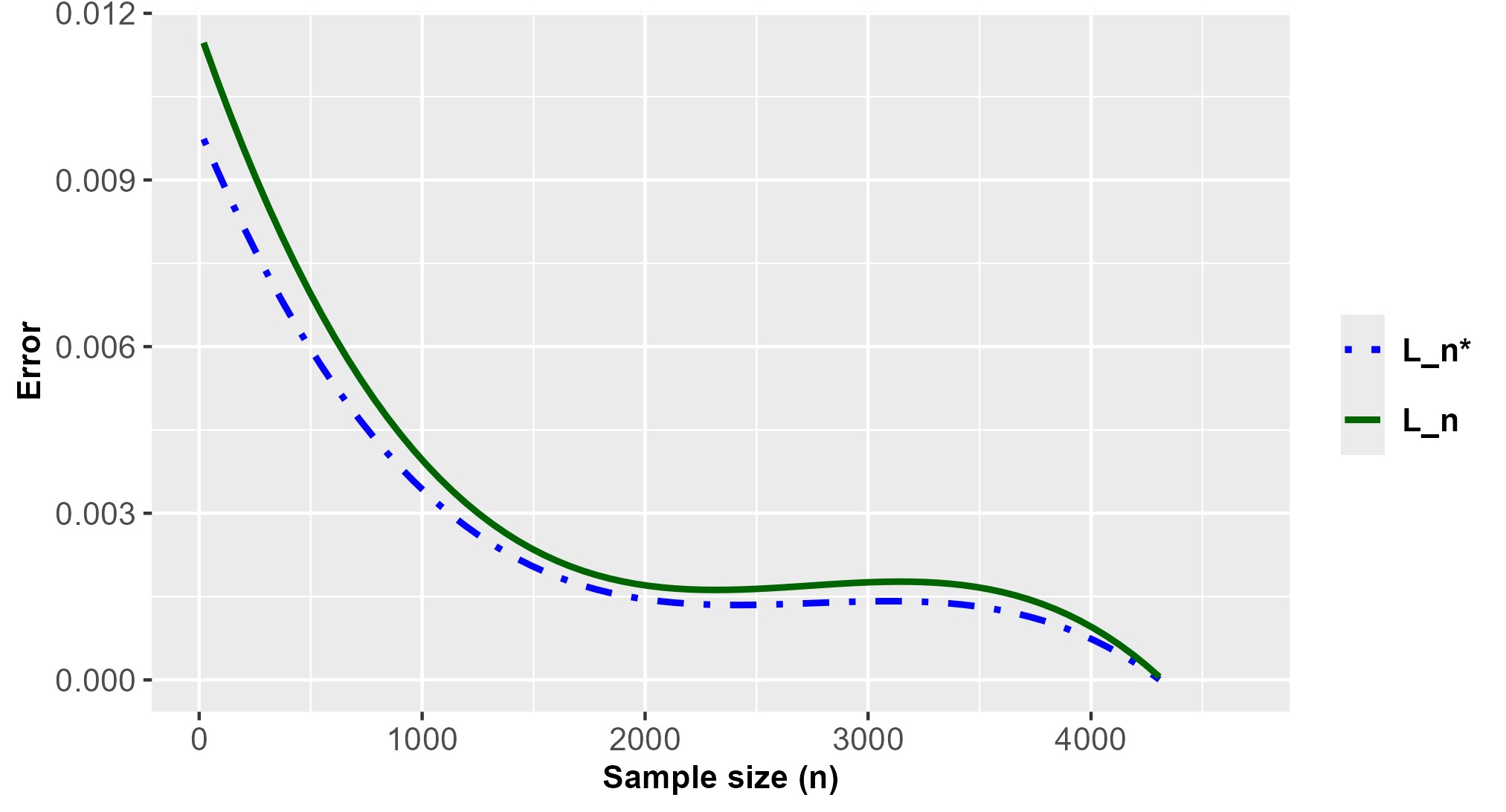}
        \caption{Using $X_{\theta}(s)$ and $\mathbf{1}_{T(\mathbf{W})>s}$ with $\mathfrak{E}_p = 0\%$}
    \end{subfigure}%
    \begin{subfigure}[b]{0.5\textwidth}
        \centering
        \includegraphics[width=\linewidth]{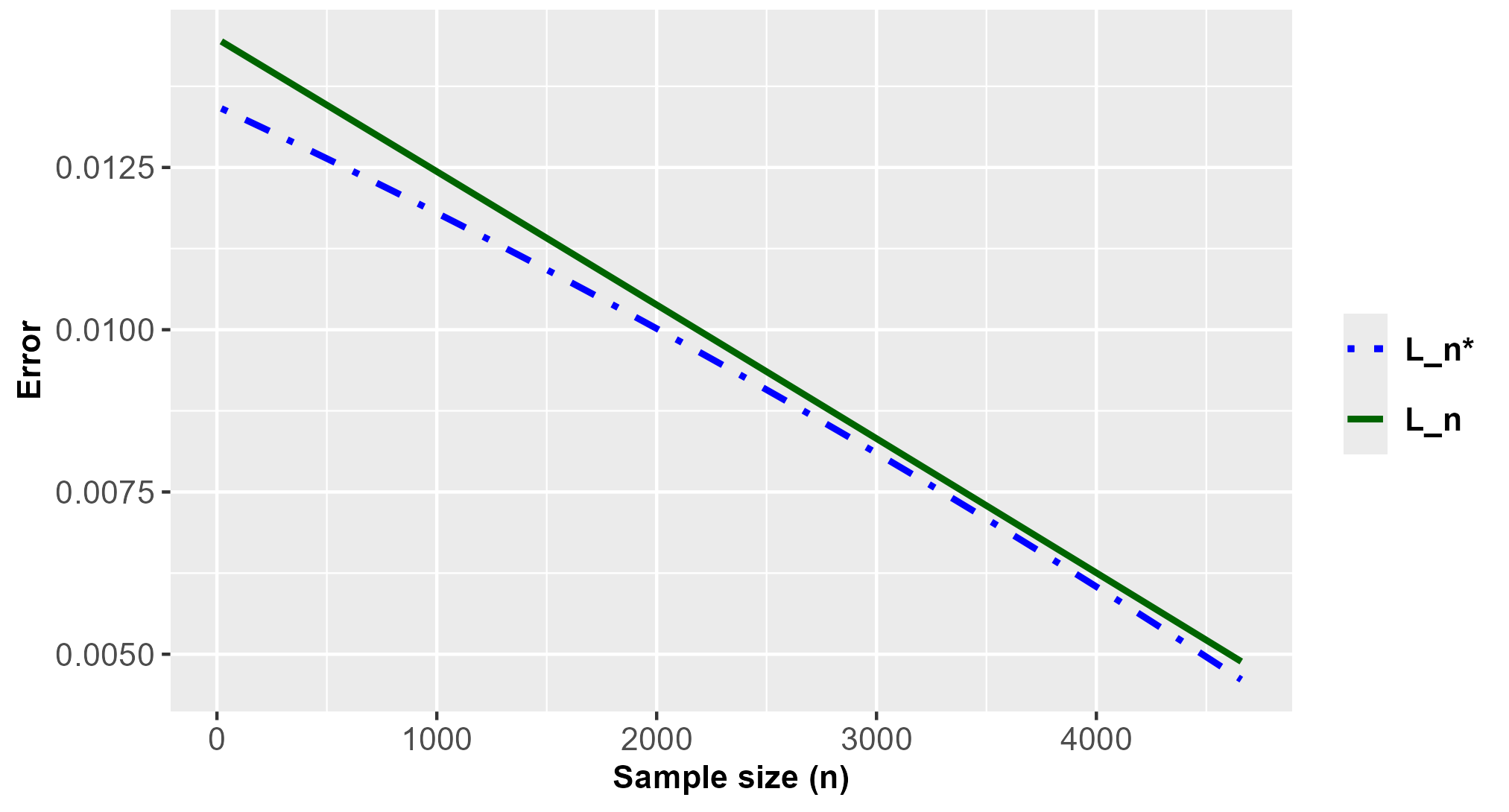}
        \caption{Using $X_{\theta}(s)$ and $\mathbf{1}_{T(\mathbf{W})>s}$ with $\mathfrak{E}_p = 5\%$}
    \end{subfigure}
    \\[5pt]
    \begin{subfigure}[b]{0.5\textwidth}
        \centering
        \includegraphics[width=\linewidth]{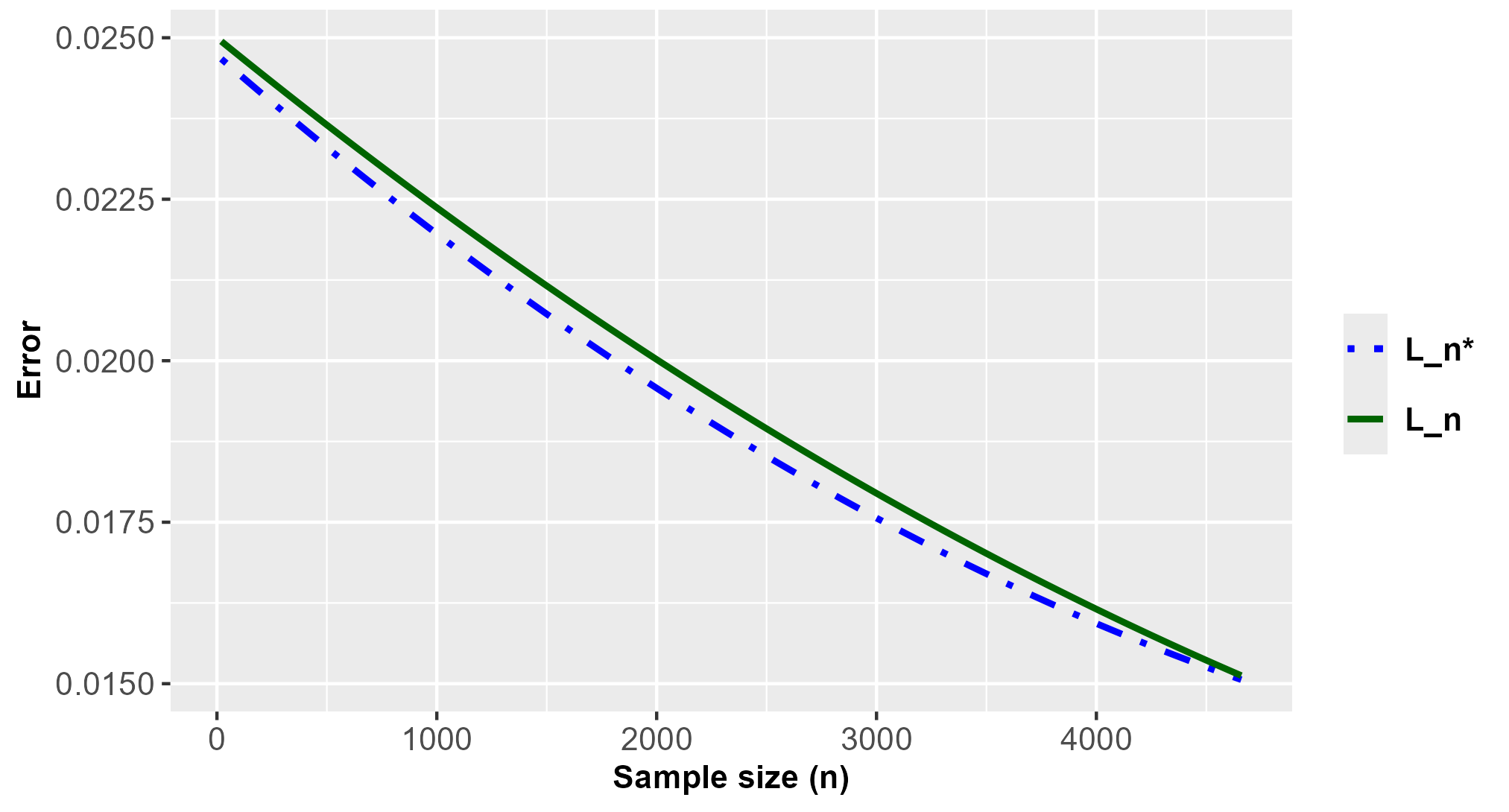}
        \caption{Using $X_{\theta}(s)$ and $\mathbf{1}_{T(\mathbf{W})>s}$ with $\mathfrak{E}_p = 10\%$}
    \end{subfigure}%
    \begin{subfigure}[b]{0.5\textwidth}
        \centering
        \includegraphics[width=\linewidth]{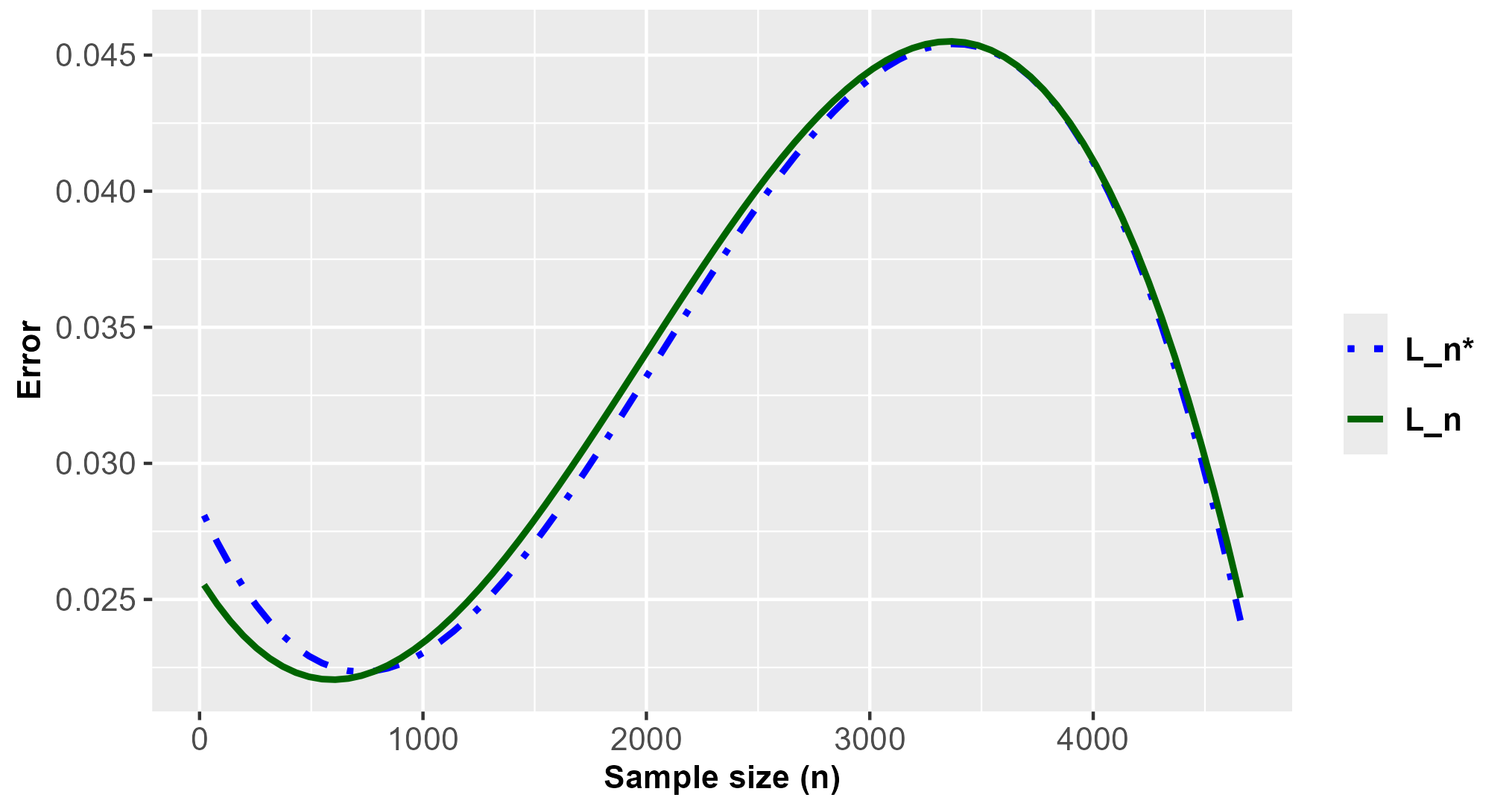}
        \caption{Using $X_{\theta}(s)$ and $\mathbf{1}_{T(\mathbf{W})>s}$ with $\mathfrak{E}_p = 15\%$}
    \end{subfigure}
    \\[5pt]
    \caption{Errors in estimating $\hat{\mathfrak{L}}$ by $\hat{\mathfrak{L}}_n$ and $\hat{\mathfrak{L}}_n^*$ as a function of sample size.}
    \label{fig:res_error_real}
\end{figure}

The results of applying algorithm \ref{alg:simul} to real data are presented in figure \ref{fig:res_error_real}. In this section, we compute the approximation error of $\hat{\mathfrak{L}}$ by $\hat{\mathfrak{L}}_n$ and $\hat{\mathfrak{L}}_n^*$ using $X_{\theta}(s)=Y\mathbf{1}_{T(\mathbf{W})\leq s}+s\phi_{\theta}(\mathbf{W})\mathbf{1}_{T(\mathbf{W})>s}$. In practice, $\mathbf{1}_{T(\mathbf{W})>s}$ could be estimated using a classification model. For this application, we reproduce the behavior of such a model by adding random noise to $\mathbf{1}_{Y>s}$ such that $\mathbf{1}_{T(\mathbf{W})>s} = \eta(\mathbf{1}_{Y>s}, \mathfrak{E}_p)$. Here, $\eta$ is the realistic noise function and $\mathfrak{E}_p$ is the desired prediction error, which we can control in order to study the effect of the classification model accuracy on our proposed methodology.

As expected, the panels of Figure \ref{fig:res_error_real} show that the errors incurred in estimating $\hat{\mathfrak{L}}$ by $\hat{\mathfrak{L}}_n$ and $\hat{\mathfrak{L}}_n^*$ increase as the accuracy of $\mathbf{1}_{T(\mathbf{W})>s}$ decreases. However, all panels of Figure \ref{fig:res_error_real} also show that the two-step estimation of $\hat{\mathfrak{L}}$ by $\hat{\mathfrak{L}}^*$ helps achieve lower errors and better estimates of the optimal expected utility, as observed with the simulated data. These results indicate that our proposed two-step methodology improves the estimation of the optimal index insurance contract in a real setting.

Note that the ability of our proposed contract to provide efficient coverage against protection gaps is possible only if a sufficient amount of data is available. Indeed, we observe that the lowest errors in estimating $\hat{\mathfrak{L}}$ are achieved only when the full sample size $m$ is considered.

In summary, the proposed index insurance model for extreme losses offers a practical solution for covering such losses and introduces an extension that addresses the problem of economic loss data scarcity—a major concern in insurance design. Moreover, in scenarios with scarce economic loss data $Y$, one could reasonably opt to work with $\hat{\mathfrak{L}^*}_{n}$ even if the relative approximation error is higher than that of $\hat{\mathfrak{L}}_{n}$, provided this error remains acceptable.

Data scarcity is particularly pronounced in developing countries, where data collection techniques and tools are often underdeveloped (\cite{eze2020exploring}). This lack of reliable data reduces the accuracy of insurance models in these regions, where certain insurance products could be vital given the significant risks faced. The methodology proposed in this paper addresses these challenges in two key ways:

\begin{itemize}
    \item The lower cost of index insurance used in our model makes it more accessible and suitable for use in developing countries.
    \item The two-step estimation of the objective function using $\hat{\mathfrak{L}}^*$ provides a practical approach to mitigate the issue of data scarcity prevalent in these regions.
\end{itemize}

\section{Comparison with a capped indemnity contract} \label{sec:comp_analysis}

In this section, we empirically investigate how the proposed hybrid contract positions itself compared to a capped indemnity contract. The capped indemnity contract (\cite{zhou2010optimal}, \cite{mao2021optimal}) is defined by the following payout:
\begin{equation*}
    X^{ST}(m) = Y\mathbf{1}_{Y \leq m} + m\mathbf{1}_{Y > m} = \min(Y, m).
\end{equation*}
Introducing a limit to compensation is a classical way to protect the insurer if the tail of the distribution is too heavy. On the other hand, this means that the policyholder will not benefit from full coverage against the risk, and this also has consequences on the attractiveness of the insurance product.

On the other hand, the hybrid product with payout defined as
\begin{equation*}
    X^{HB}(s)=Y\mathbf{1}_{T(\mathbf{W}) \leq s}+s\phi_{\theta}(\mathbf{W})\mathbf{1}_{T(\mathbf{W})>s},
\end{equation*}
is less limited in terms of maximum amount of compensation. The question is: for the same price, which type of contracts ($X^{ST}$ or $X^{HB}$) will give the best coverage for the policyholder, in terms of average ratio between the compensation and the loss? For the same price, the threshold $s$ of $X^{HB}$ is necessarily smaller than the value of $m$ used in $X^{ST},$ but this can be compensated by a better behavior of the parametric part. 

In the rest of this section, we use the optimal values of the parameter $\theta$ obtained from the computations in the previous sections. The settings used in those sections for simulated and real losses are equally preserved here. To generalize this analysis, we assume that in the hybrid contract, the loading factors of the index part and that of the indemnity part of the contract are different. This is expected in practice due to the lower operational costs of index insurance (\cite{barnett2007weather}). We denote by $\tau^i$ and $\tau^t$ the loading factors of index and traditional indemnity-based insurance, respectively.

The premium of the capped indemnity contract is therefore given by:

\begin{equation*}
    \pi^{SL}(m) = (1 + \tau^t)\mathbf{E}[X^{SL}] = (1 + \tau^t)\mathbf{E}[min(Y, m)]
\end{equation*}

That of the hybrid contract is given by:

\begin{equation*}
    \pi^{HB}(s) = (1 + \tau^t)\mathbf{E}[Y\mathbf{1}_{T(\mathbf{W}) \leq s}] + (1 + \tau^i)\mathbf{E}[s\phi_{\theta}(\mathbf{W})\mathbf{1}_{T(\mathbf{W})>s}]
\end{equation*}

We then define a function $m(s)$ such that, for a given $s$, $\pi^{HB}(s) = \pi^{SL}(m(s))$. The aim of this definition is to compare the expected rate of compensation of the capped indemnity contract $(\mathbf{E}[X^{SL}(m(s))/Y])$ to that of the hybrid contract $(\mathbf{E}[X^{HB}(s)/Y])$ at equal prices for a given $s$. For every $s$, the procedure consists in setting a value for $\tau^i$, calculating the value of $\pi^{HB}(s)$, and using the function defined previously to find $m(s)$ such that $\pi^{HB}(s) = \pi^{SL}(m(s))$. Note that as $\tau^i$ varies, the rate of compensation of the hybrid contract remains unchanged (as $s$ is fixed), but that of the capped indemnity contract has to adapt through $m(s)$ for the prices of the two contracts to be the same. The value of $\tau^t$ is kept fixed at 40\%. Figure \ref{fig:comp_analysis} shows the results obtained for simulated and real data. 

\begin{figure}[!h]
    \centering
    \begin{subfigure}[b]{0.5\textwidth}
        \centering
        \includegraphics[width=\linewidth]{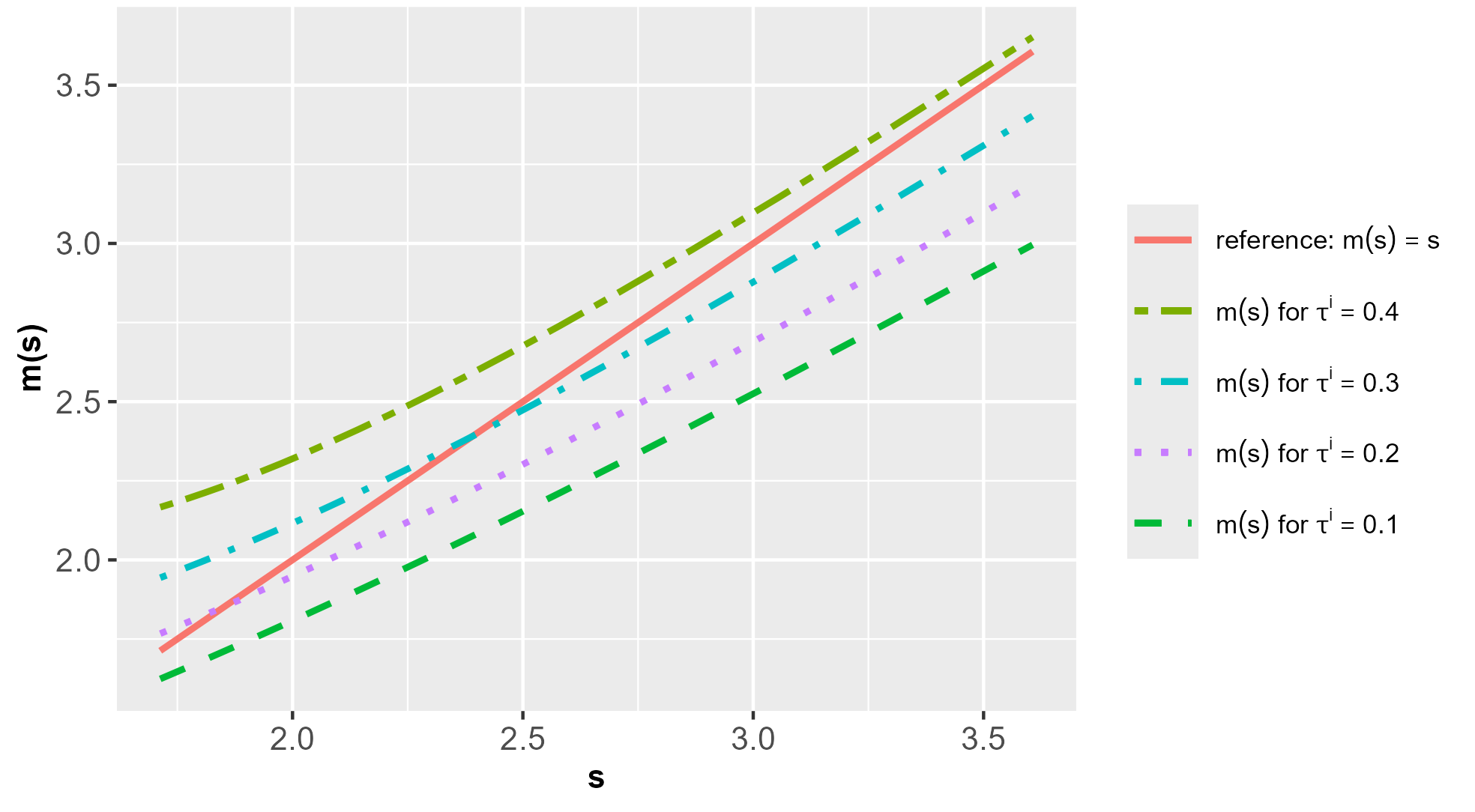}
        \caption{$m(s)$ vs $s$ for simulated losses}
    \end{subfigure}%
    \begin{subfigure}[b]{0.5\textwidth}
        \centering
        \includegraphics[width=\linewidth]{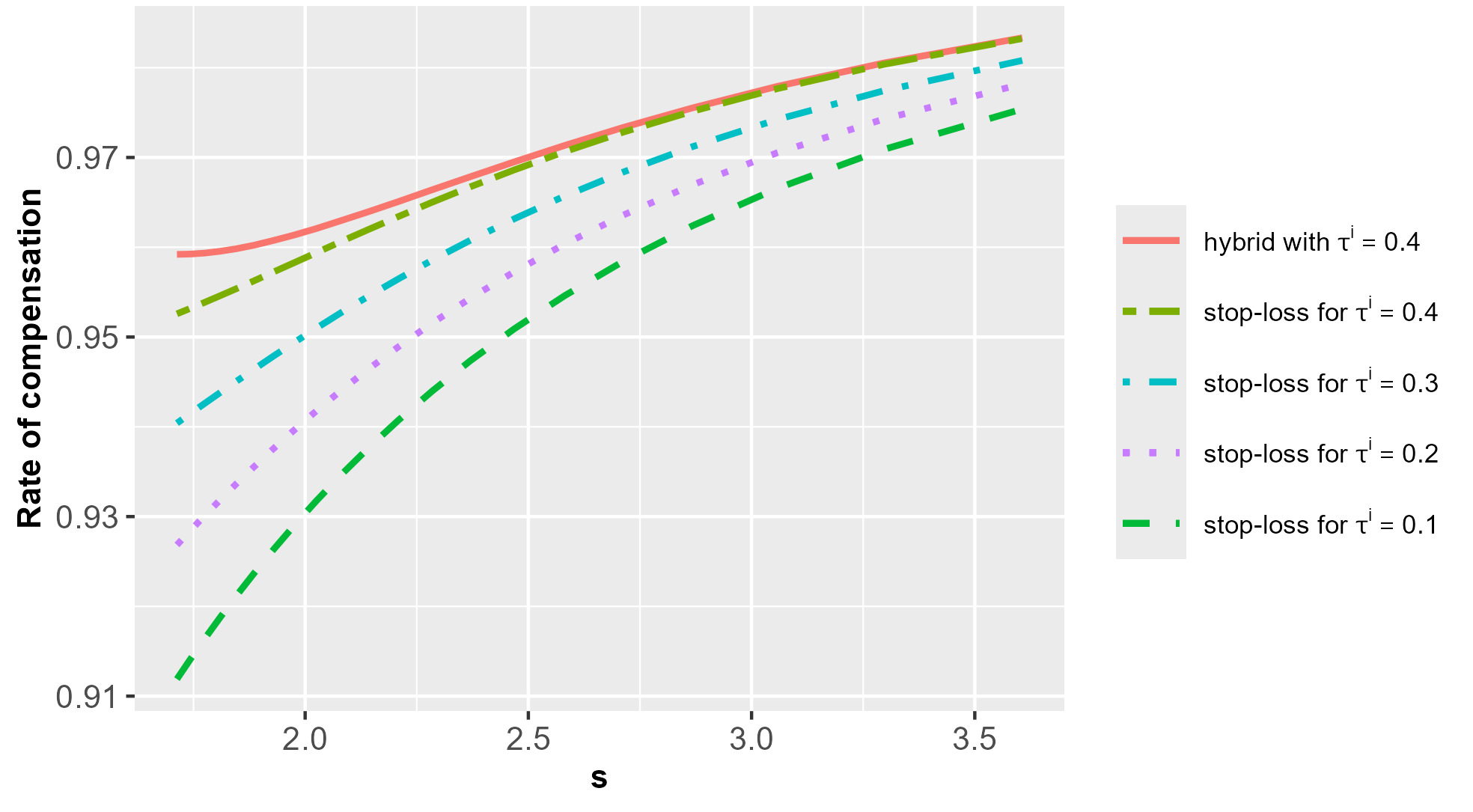}
        \caption{$\mathbf{E}[\frac{X}{Y}]$ vs $s$ for simulated losses}
    \end{subfigure}
    \\[5pt]
    \begin{subfigure}[b]{0.5\textwidth}
        \centering
        \includegraphics[width=\linewidth]{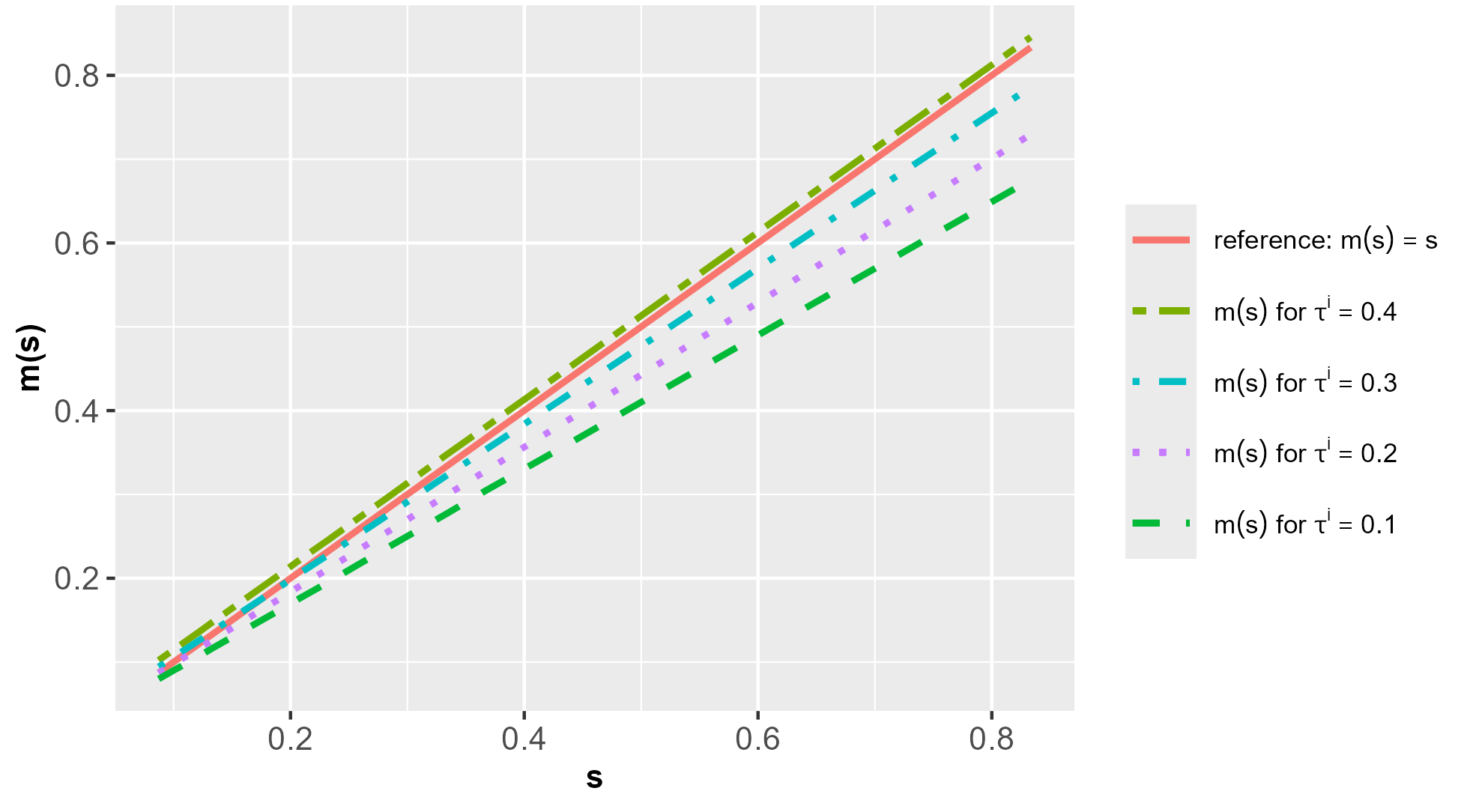}
        \caption{$m(s)$ vs $s$ for real losses}
    \end{subfigure}%
    \begin{subfigure}[b]{0.5\textwidth}
        \centering
        \includegraphics[width=\linewidth]{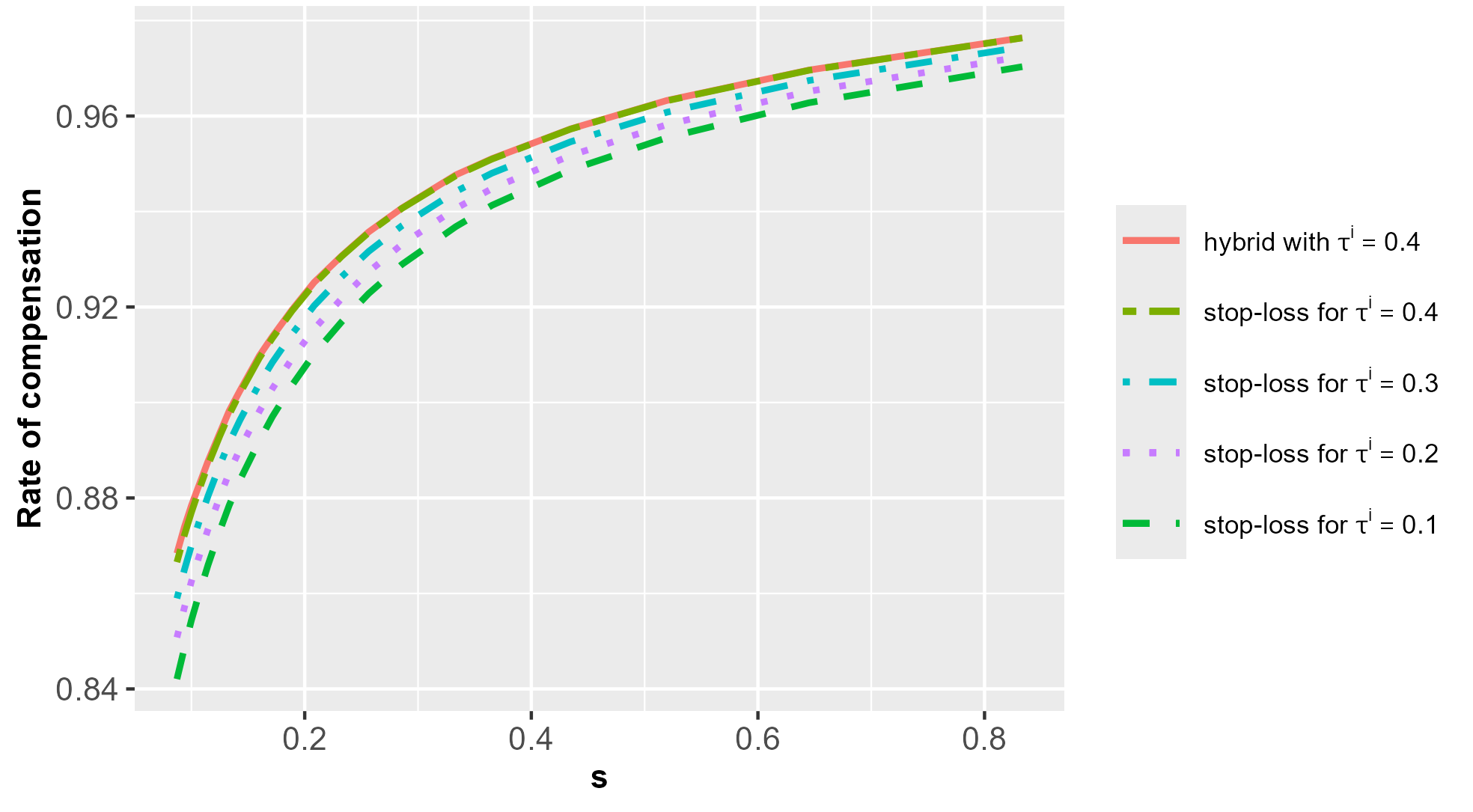}
        \caption{$\mathbf{E}[\frac{X}{Y}]$ vs $s$ for real losses}
    \end{subfigure}
    \\[5pt]
    \caption{Comparative analysis of expected rates of compensation of the proposed hybrid insurance contract and the classical capped indemnity contract. The results are presented for real and simulated data and for various values of $\tau^i$, the loading factor of index insurance. Note that $\tau^t$ is set at 40\%.}
    \label{fig:comp_analysis}
\end{figure}

Panels (a) and (c) of Figure \ref{fig:comp_analysis} show that for large values of $s$ (extreme losses) in the hybrid contract, the cap $m(s)$ of a capped indemnity contract of identical price is lower than $s$. Also, this value of $m(s)$ decreases as the loading of the index part of the proposed hybrid contract decreases. This highlights an additional advantage of the proposed hybrid contract in terms of quality of coverage. Indeed, these results are presented in the form of expected rates of compensation in panels (b) and (d) of the same figure. These panels show that the expected rates of compensation of the capped indemnity contract decrease as the price of the index part of the hybrid contract decreases for a given value of $s$. This means that the advantage of the lower cost of index insurance makes the proposed hybrid contract more competitive and more suitable for policyholders than a capped indemnity contract of equal price. Moreover, further work on reducing the basis risk in the index part of this product could greatly improve this enhanced coverage provided by index insurance. 

Also, recall that in the payout functions of the hybrid insurance contract presented in Sections \ref{sec:simul_set} and \ref{sec:real_set}, a condition is imposed such that compensation for losses above the threshold $s$ is never less than $s$. This hybrid contract therefore has an implicit cap condition in addition to using index values to provide suitable compensation above $s$. This provides another explanation for why the hybrid contract compensates more than a classical capped indemnity contract of equal price.

\section{Conclusion}

In this paper, we studied the conception of an insurance product that mixes traditional insurance and an index-based part. The idea is to improve the ability to cover risks that are heavy tail, instead of applying a fixed limit of compensation. The use of covariates is used to provide an indemnity which is as close as possible to the needs of the policyholder, with a fast computation of the compensation which does not require expert analysis. The calibration of the index-based part of the product is based on a specific metric. Moreover, we provided an approximation of this criterion that allows to enhance the calibration part via the use of additional data on the covariates available after a claim (without the need of getting additional data on the economic losses).

The criterion that we use is of course dependent on assumptions on the behavior of the policyholders, and the question of the choice of the function $L$ used for the optimization is a question that should be investigated from data on the behavior of the customers. This important question is beyond the scope of this paper. Regarding the behavior of the policyholder, we here consider that the only reason for preferring a reduced cover is a lower price. In practice, this may not be the sole factor: a faster compensation could be preferred, even though this compensation would be lower, since fast compensation may help the policyholder to repair faster, which is an important aspect of resilience. Moreover, even from a price prospective, the reduction of loading factors on the premiums of index insurance due to simplified claim management provides another level of optimization. This important effect of time is an aspect of the problem that could be investigated as an extension of the present work.

Another potential extension would concern the case where $\mathbf{W}$ is high-dimensional. In this case, the so-called "curse of dimensionality" may limit the quality of the index calibration, and/or could lead to overfitting due to the inability to understand the complexity of the impact of the covariates on the tail of the distribution. The study of the adaptation of tail index regression methods (see e.g. \cite{velthoen2023gradient}, \cite{rai2024fast}) should be considered to improve the method.

\section{Appendix}
\label{sec:appendix}

The Appendix section is organized as follows. In section \ref{sec:proof1}, we provide the proof of the key approximation result, that is Theorem \ref{sec:proof1}, followed by section \ref{sec:prooftrigger} which provides a technical lemma related to the analysis of the trigger mechanism. Section \ref{sec:proofestim} is related to the proof of Theorem \ref{th_theta} on the convergence rate of the index optimization procedure, which relies on a technical Lemma presented in section \ref{sec:technical}. Additional empirical results related to the real data analysis are provided in section \ref{sec:addresults}. Equations \ref{eq:summary_imp} summarize the key notations used. They are provided to facilitate the understanding of the results and the proofs.

\begin{equation}\label{eq:summary_imp}
    \begin{aligned}
        \mathfrak{L}_\theta(s)&=E\left[L\left(\frac{X_{\theta}(s)}{Y}-f(\pi_{\theta}(s))\right)\right] \text{ and } \tilde{\theta}(s)=\arg \max_{\theta\in \Theta}\mathfrak{L}_\theta(s)\\
        \mathfrak{L}^*_{\theta}(s)&=\mathfrak{L}_{\theta}(s)/(1+\frak{R}_{\theta}(s))\\ &= E\left[\Psi(\pi_{\theta}(s),\phi_{\theta}(\mathbf{W});S(s|\mathbf{W}),\gamma(\mathbf{W}))\right] \text{ and } \tilde{\theta}^*(s)=\arg \max_{\theta\in \Theta}\mathfrak{L}^*_{\theta}(s)\\
        \hat{\mathfrak{L}}^*_{\theta}(s)&=\frac{1}{m}\sum_{j=1}^m \Psi\left(\hat{\pi}_{\theta}(s),\phi_{\theta}(\mathbf{W}_j);\hat{S}(s|\mathbf{W}_j),\hat{\gamma}(\mathbf{W}_j)\right) \text{ and } \hat{\tilde{\theta}}^*(s)=\arg \max_{\theta \in \Theta}\hat{\mathfrak{L}}^*_{\theta}(s)
    \end{aligned}
\end{equation}

\subsection{Proof of Theorem \ref{th_approx}}
\label{sec:proof1}

Let 
$$\mathfrak{L}_{\theta}(s|\mathbf{w})=E\left[L\left(\frac{X_{\theta}(s)}{Y}-f(\pi_{\theta}(s))\right)|\mathbf{W}=\mathbf{w}\right].$$

Let us introduce $$\tilde{X}_{\theta}(s)=Y\mathbf{1}_{Y\leq s}+s\phi_{\theta}(\mathbf{W})\mathbf{1}_{Y>s}.$$ $\tilde{X}_{\theta}(s)$ is the "ideal" hybrid product that we would use if $Y$ could be known just after the claim. Then, we define $\tilde{\mathfrak{L}}_{\theta}(s|\mathbf{w})=E\left[\left(\frac{\tilde{X}_{\theta}(s)}{Y}-f({\pi}_{\theta}(s))\right)|\mathbf{W}=\mathbf{w}\right].$

We have
\begin{eqnarray}{\tilde{\mathfrak{L}}_{\theta}(s|\mathbf{w})}&= & E\left[L\left(\mathbf{1}_{Y\leq s}+\frac{s\phi_{\theta}(\mathbf{w})}{Y}\mathbf{1}_{Y>s}-f(\pi_{\theta}(s))\right)|\mathbf{W}=\mathbf{w}\right] \nonumber \\
&=& \mathbb{P}(Y\leq s|\mathbf{W}=\mathbf{w})L(1-f(\pi_{\theta}(s)))-\int_s^{\infty}L\left(\frac{s\phi_{\theta}(\mathbf{w})}{t}-f(\pi_{\theta}(s))\right)dS_Y(t|\mathbf{w}) \nonumber \\
&=& L(1-f(\pi_{\theta}(s)))\left\{1-S(s|\mathbf{w})\left[1-\frac{L\left(\phi_{\theta}(\mathbf{w})-f(\pi_{\theta}(s)\right)}{L\left(1-f(\pi_{\theta}(s))\right)}+\mathcal{L}_{\theta,1}(s|\mathbf{w})\right]\right\}, \label{d1}
\end{eqnarray}
with
\begin{eqnarray*}\mathcal{L}_{\theta,1}(s|\mathbf{w}) &=& s\phi_{\theta}(\mathbf{w}) \int_s^{\infty} \frac{L'\left(\frac{s\phi_{\theta}(\mathbf{w})}{t}-f(\pi_{\theta}(s))\right)}{L\left(1-f(\pi_{\theta}(s))\right)}\frac{S(t|\mathbf{w})dt}{S(s|\mathbf{w})t^2} \\
&=& \phi_{\theta}(\mathbf{w}) \int_{1}^{\infty} \frac{L'\left(\frac{\phi_{\theta}(\mathbf{w})}{u}-f(\pi_{\theta}(s))\right)}{L\left(1-f(\pi_{\theta}(s))\right)}\frac{l(su|\mathbf{w})}{l(s|\mathbf{w})}\frac{du}{u^{2+\frac{1}{\gamma(\mathbf{w})}}}.
\end{eqnarray*}

From Assumption \ref{a0},
\begin{equation}\frac{L\left(\phi_{\theta}(\mathbf{w})-f(\pi_{\theta}(s))\right)}{L\left(1-f(\pi_{\theta}(s))\right)}=\varphi_0(\phi_{\theta}(\mathbf{w}))(1+R_{\theta,0}(s,\mathbf{w}))\label{d2}\end{equation}
with $\sup_{\theta, \mathbf{w}}|R_{\theta,0}(s,\mathbf{w})|\rightarrow 0,$ when $s$ tends to infinity.

We have,
\begin{eqnarray*}\mathcal{L}_{\theta,1}(s|\mathbf{w})) &=& \phi_{\theta}(\mathbf{w})\int_{1}^{\infty}\varphi_1(\phi_{\theta}(\mathbf{w})/u)\frac{l(su|\mathbf{w})du}{l(s|\mathbf{w})u^{2+\frac{1}{\gamma(\mathbf{w})}}}(1+R_{\theta,1}(s,\mathbf{w})), 
\end{eqnarray*}
where, from Assumption \ref{a-1},

$$|R_{\theta,1}(s,\mathbf{w})|\leq \frac{C}{c}\left\{\sup_{t}\frac{\left|\frac{L'\left(t-f(\pi_{\theta}(s))\right)}{L\left(1-f(\pi_{\theta}(s))\right)}-\varphi_1(t)\right|}{\varphi_1(t)}\right\}\int_1^{\infty} \frac{l^{0}(su)}{l^{0}(s)}\frac{du}{u^2}.$$

Since $l^{0}$ is slow-varying, dominated convergence yields
$$\int_1^{\infty} \frac{l^{0}(su)}{l^{0}(s)}\frac{du}{u^2}\rightarrow \int_1^{\infty}\frac{du}{u^2},$$

when $s$ tends to infinity. Then, from Assumption \ref{a1}, we get \begin{equation}
\sup_{\theta,\mathbf{w}}|R_{\theta,1}(s,\mathbf{w})|=o(1).\label{d3}\end{equation}

Next, let
$$\mathcal{L}_{\theta,2}(s|\mathbf{w})=\phi_{\theta}(\mathbf{w})\int_{1}^{\infty}\varphi_1(\phi_{\theta}(\mathbf{w})/u)\frac{l(su|\mathbf{w})du}{l(s|\mathbf{w})u^{2+\frac{1}{\gamma(\mathbf{w})}}}.$$
We have
\begin{eqnarray*}\mathcal{L}_{\theta,2}(s|\mathbf{w}) &=& \phi_{\theta}(\mathbf{w})\int_{1}^{\infty}\varphi_1(\phi_{\theta}(\mathbf{w})/u)\frac{du}{u^{2+\frac{1}{\gamma(\mathbf{w})}}}+R_{\theta,2}(s,\mathbf{w}) \\
&=& \left\{\phi_{\theta}(\mathbf{w})^{-1/\gamma(\mathbf{w})}\int_0^{\phi_{\theta}(\mathbf{w})}v^{1/\gamma(\mathbf{w})}\varphi_1(v)dv\right\}(1+R_{\theta,2}(s,\mathbf{w})) \\
&=& \Phi_1(\phi_{\theta}(\mathbf{w}),\gamma(\mathbf{w})),
\end{eqnarray*}
with
$$R_{\theta,2}(s,\mathbf{w})=\phi_{\theta}(\mathbf{w})\int_{1}^{\infty}\varphi_1(\phi_{\theta}(\mathbf{w})/u)\left[\frac{l(su|\mathbf{w})}{l(s|\mathbf{w})}-1\right]\frac{du}{u^{2+\frac{1}{\gamma(\mathbf{w})}}}.$$
We have
$$R_{\theta,2}(s,\mathbf{w})\leq |\Phi_1(\phi_{\theta}(\mathbf{w}),\gamma(\mathbf{w}))|\sup_{\mathbf{w}}\left|\frac{l(su|\mathbf{w})}{l(s|\mathbf{w})}-1\right|,$$
and, from Assumption \ref{a-1},
\begin{equation}\sup_{\theta,\mathbf{w}}\left|\frac{R_{\theta,2}(s,\mathbf{w})}{\Phi_1(\phi_{\theta}(\mathbf{w}),\gamma(\mathbf{w}))}\right|\rightarrow_{s\rightarrow \infty} 0.\label{d4}\end{equation}

Gathering (\ref{d1}) to (\ref{d4}), we get
$$\tilde{\mathfrak{L}}_{\theta}(s|\mathbf{w})=L(1-f(\pi_{\theta}(s)))\left\{1-S(s|\mathbf{w})\left[1-\varphi_0(\phi_{\theta}(\mathbf{w}))+\Phi_1(\phi_{\theta}(\mathbf{w})\right]\right\}(1+R_{\theta}(s,\mathbf{w})),$$
with $\sup_{\theta,w}|R(s,\mathbf{w})|\rightarrow 0$ when $s$ tends to infinity.

The result then follows from Lemma \ref{lemma_1} below (whose proof is given in section \ref{sec:prooftrigger}), since $\mathfrak{L}_{\theta}(s)=E\left[\mathfrak{L}_{\theta}(s|\mathbf{W})\right].$

\begin{lemma}
\label{lemma_1}
Under the assumptions of Theorem \ref{th_approx},
$$\left|\mathfrak{L}_{\theta}(s|\mathbf{w})-\tilde{\mathfrak{L}}_{\theta}(s|\mathbf{w})\right|\leq \sup_{x}|L'(x)|\times \max\left(\frac{s\Phi(\mathbf{w})}{Y}-1,1\right)\mathbf{1}_{F(s)}.$$
\end{lemma}

\subsection{Proof of Lemma \ref{lemma_1}}
\label{sec:prooftrigger}

We have
$$\left|L\left(\frac{X_{\theta}(s)}{Y}-f(\pi_{\theta}(s))\right)-L\left(\frac{\tilde{X}_{\theta}(s)}{Y}-f(\pi_{\theta}(s))\right)\right|\leq \sup_{x}|L'(x)|\times \left|\frac{s\phi_{\theta}(\mathbf{W})}{Y}-1\right|\mathbf{1}_{F(s)}.$$ If $\left|\frac{s\phi_{\theta}(\mathbf{W})}{Y}-1\right|=\frac{s\phi_{\theta}(\mathbf{W})}{Y}-1,$ then it is bounded by $s\Phi(\mathbf{W})/Y-1.$ In the other case, it is bounded by 1, and the result follows by taking the conditional expectation.

\subsection{Proof of Theorem \ref{th_theta}}
\label{sec:proofestim}

Let us denote
$$\bar{\mathfrak{L}}^*_{\theta}(s)=\frac{1}{m}\sum_{j=1}^m \Psi\left(\pi_{\theta}(s),\phi_{\theta}(\mathbf{W}_{j});S(s|\mathbf{W}_{j}),\gamma(\mathbf{W}_{j})\right).$$ This function of $\theta$ is the one we could compute if we knew exactly $S$ and $\gamma.$ The proof Theorem \ref{th_theta} is essentially based on the fact that the difference between $\bar{\mathfrak{L}}_{\theta}^*(s)$ and $\hat{\mathfrak{L}}_{\theta}^*(s)$ is asymptotically negligible. Thus, $\bar{\tilde{\theta}}^*(s)=\arg \max_{\theta\in \Theta}\bar{\mathfrak{L}}_{\theta}^*(s)$ and $\hat{\tilde{\theta}}^*(s)$ are asymptotically equivalent.

The result of the Theorem is shown in two steps. First, we show the consistency of $\hat{\tilde{\theta}}^*(s)$ (Proposition \ref{prop_prel}), then the convergence rate is derived in a second step.

\textbf{Step 1: consistency.}

\begin{proposition}
\label{prop_prel}
Under the assumptions of Theorem \ref{th_theta},
$$\hat{\tilde{\theta}}^*(s)-\tilde{\theta}^*(s)=o_P(1).$$
\end{proposition}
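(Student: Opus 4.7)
The plan is to follow the standard route for consistency of M-estimators (see e.g.\ Theorem 5.7 in \cite{van1996weak}), which reduces the problem to two ingredients: (i) uniform convergence of the criterion $\hat{\mathfrak{L}}^*_\theta(s)$ to $\mathfrak{L}^*_\theta(s)$ over $\theta\in\Theta$, and (ii) well-separation of $\tilde{\theta}^*(s)$ as a maximizer of $\mathfrak{L}^*_\theta(s)$. The well-separation is cheap: by assumption $\Sigma(s)=\nabla^2_\theta\mathfrak{L}^*_{\tilde{\theta}^*}(s)$ is invertible and the bounds in Assumption \ref{a_uf} make $\mathfrak{L}^*_\theta(s)$ continuous in $\theta$, so $\tilde{\theta}^*(s)$ is a strict local maximum on a neighborhood, and I would use compactness of $\Theta$ (or a standard tightness argument) together with the usual argmax continuous mapping.

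The bulk of the work is item (i). I would introduce the intermediate criterion
\begin{equation*}
\bar{\mathfrak{L}}^*_\theta(s)=\frac{1}{m}\sum_{j=1}^m \Psi\left(\pi_\theta(s),\phi(\theta,\mathbf{W}_j);S(s|\mathbf{W}_j),\gamma(\mathbf{W}_j)\right),
\end{equation*}
in which the true nuisance functions $S,\gamma$ and the exact premium $\pi_\theta(s)$ are used, and decompose
\begin{equation*}
\sup_{\theta\in\Theta}\bigl|\hat{\mathfrak{L}}^*_\theta(s)-\mathfrak{L}^*_\theta(s)\bigr|\le \sup_{\theta\in\Theta}\bigl|\hat{\mathfrak{L}}^*_\theta(s)-\bar{\mathfrak{L}}^*_\theta(s)\bigr|+\sup_{\theta\in\Theta}\bigl|\bar{\mathfrak{L}}^*_\theta(s)-\mathfrak{L}^*_\theta(s)\bigr|.
\end{equation*}
For the second term I would use the Donsker property supplied by Assumption \ref{a_donskerbis}: the class of maps $\mathbf{w}\mapsto \Psi(\pi_\theta(s),\phi(\theta,\mathbf{w});S(s|\mathbf{w}),\gamma(\mathbf{w}))$, built by composing the classes from Assumption \ref{a_donskerbis} with the (locally Lipschitz, bounded) map $\Psi$, inherits the Glivenko--Cantelli property, so this supremum is $o_P(1)$.

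For the plug-in term I would perform a Taylor expansion of $\Psi$ in its arguments $(\pi,S,\gamma)$. Under Assumption \ref{a_uf}, the partial derivatives of $\Psi$ with respect to these arguments are uniformly bounded on the relevant range (using the control $\int_0^\infty v^{\alpha}\varphi_1(v)dv<\infty$ and the lower bound on $\gamma$ to ensure that $\Phi_0,\Phi_1$ and their derivatives stay finite). Combined with the uniform consistency of $(\hat{S},\hat{\gamma})$ from Assumption \ref{a_donsker}, and with $\sup_\theta|\hat{\pi}_\theta(s)-\pi_\theta(s)|=O_P(n^{-1/2})$ coming from applying the Glivenko--Cantelli/Donsker properties of Assumption \ref{a_donskerbis} to the class $\{X_\theta(s):\theta\in\Theta\}$, I obtain
\begin{equation*}
\sup_{\theta\in\Theta}\bigl|\hat{\mathfrak{L}}^*_\theta(s)-\bar{\mathfrak{L}}^*_\theta(s)\bigr|=O_P(\varepsilon_n+n^{-1/2})=o_P(1).
\end{equation*}

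The main obstacle I anticipate is controlling this plug-in term \emph{uniformly} in $\theta$: because $\hat{S}(s|\mathbf{w})$ and $\hat{\gamma}(\mathbf{w})$ are data-dependent random functions evaluated at the same random points $\mathbf{W}_j$ entering the empirical average, a naive bound would not be enough. Here the combination of the uniform (in $\mathbf{w}$) rate in Assumption \ref{a_donsker} and the Donsker control in Assumption \ref{a_donskerbis} is essential, because it lets one replace $\hat S,\hat\gamma$ by $S,\gamma$ uniformly over $\theta$ without paying an additional price beyond $\varepsilon_n$. Once this uniform convergence is in hand, the consistency conclusion $\hat{\tilde{\theta}}^*(s)-\tilde{\theta}^*(s)=o_P(1)$ follows from the argmax continuous mapping argument.
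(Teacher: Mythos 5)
Your proposal is correct and follows essentially the same route as the paper: reduce to uniform convergence of $\hat{\mathfrak{L}}^*_{\theta}(s)$ to $\mathfrak{L}^*_{\theta}(s)$ plus well-separation of the maximizer, and split via the oracle criterion $\bar{\mathfrak{L}}^*_{\theta}(s)$, handling the $\bar{\mathfrak{L}}^*\to\mathfrak{L}^*$ term by a Glivenko--Cantelli argument for the Lipschitz-in-$\theta$ class and the plug-in term via the uniform rates of Assumption \ref{a_donsker}. Your explicit treatment of the $\hat{\pi}_{\theta}(s)$ versus $\pi_{\theta}(s)$ discrepancy is a small point the paper's own computation glosses over, and is a welcome addition rather than a deviation.
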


\begin{proof}
To prove the consistency, it suffices to show that
$$\sup_{\theta}|\hat{\mathfrak{L}}_{\theta}^*(s)-\mathfrak{L}^*_{\theta}(s)|=o_P(1),$$ since $\tilde{\theta}^*(s)$ is the unique maximizer of $\mathfrak{L}^*_{\theta}(s)$ and is in the interior part of $\Theta$ (see for example Theorem 5.7 in \cite{van2000asymptotic}).

We have \begin{equation}\sup_{\theta \in \Theta} \left|\hat{\mathfrak{L}}_{\theta}^*(s)-\mathfrak{L}^*_{\theta}(s)\right|\leq \sup_{\theta \in \Theta} \left|\hat{\mathfrak{L}}_{\theta}^*(s)-\bar{\mathfrak{L}}_{\theta}^*(s)\right|+ \sup_{\theta \in \Theta} \left|\bar{\mathfrak{L}}_{\theta}^*(s)-\mathfrak{L}^*_{\theta}(s)\right|.\label{eqq0}\end{equation}
We have
\begin{eqnarray*}
\left|\Psi\left(\pi_{\theta}(s),\phi_{\theta}(\mathbf{w});S(s|\mathbf{w}),\gamma(\mathbf{w})\right)-\Psi\left(\pi_{\theta'}(s),\phi_{\theta'}(\mathbf{w});S(s|\mathbf{w}),\gamma(\mathbf{w})\right)\right| &\leq & \|\theta-\theta'\|L(\mathbf{w}),
\end{eqnarray*}
with
\begin{eqnarray*}
L(\mathbf{w}) &=& \sup_{\theta\in \Theta}\left|-f'(\pi_{\theta}(s))\frac{\partial \pi_{\theta}(s)}{\partial \theta}L'(1-f(\pi_{\theta}(s)))(1-S(s|\mathbf{w}))\Phi_0(\phi_{\theta}(\mathbf{w}),\gamma(\mathbf{w}))\right| \\
&&+\sup_{\theta\in \Theta}\left|\frac{\partial \phi_{\theta}(\mathbf{w})}{\partial \theta}L\left(1-f(\pi_{\theta}(s))\right)(1-S(s|\mathbf{w}))\partial_1\Phi_0(\phi_{\theta}(\mathbf{w}),\gamma(\mathbf{w}))\right| \\
&&\leq C\Lambda(\mathbf{w}),
\end{eqnarray*}
from Assumption \ref{a_uf}, for some constant $C.$

Therefore, from Example 19.7 in \cite{van2000asymptotic}, we get 
\begin{equation}
\label{eqq1}\sup_{\theta \in \Theta} \left|\bar{\mathfrak{L}}_{\theta}^*(s)-\mathfrak{L}^*_{\theta}(s)\right|=o_P(1).
\end{equation}

Next,
\begin{eqnarray*}\hat{\mathfrak{L}}_{\theta}^*(s)-\bar{\mathfrak{L}}_{\theta}^*(s) &=& \frac{1}{m}\sum_{j=1}^m \left\{\frac{S(s|\mathbf{W}_{j})-\hat{S}(s|\mathbf{W}_{j})}{1-S(s|\mathbf{W}_{j})} + \int_0^{\phi_{\theta}(\mathbf{W}_{j})} \left(v^{1/\hat{\gamma}(\mathbf{W}_{j})}-v^{1/\gamma(\mathbf{W}_{j})}\right)\varphi_1(v)dv\right\}\\
&& \times \Psi\left(\pi_{\theta}(s),\phi_{\theta}(\mathbf{W}_{j});S(s|\mathbf{W}_{j}),\gamma(\mathbf{W}_{j})\right).
\end{eqnarray*}
We get, for some constant $C_0,$
\begin{eqnarray*}\left|\hat{\mathfrak{L}}_{\theta}^*(s)-\bar{\mathfrak{L}}_{\theta}^*(s) \right| &\leq & C_0
\sup_{\mathbf{w}\in \mathcal{W}}\left|\frac{S(s|\mathbf{w})-\hat{S}(s|\mathbf{w})}{1-S(s|\mathbf{w})}\right|\times \sup_{\mathbf{w}\in \mathcal{W}}\left|\frac{1}{\hat{\gamma}(\mathbf{w})}-\frac{1}{{\gamma}(\mathbf{w})}\right|\int_0^{\infty} v^{1/\gamma_-} \varphi_1(v)dv,
\end{eqnarray*}
since $\Psi$ is bounded. Assumption \ref{a_donsker} and Assumption \ref{a_uf} lead to
\begin{equation}
\label{eqq2}\sup_{\theta \in \Theta} \left|\hat{\mathfrak{L}}_{\theta}^*(s)-\bar{\mathfrak{L}}_{\theta}^*(s)\right|=o_P(1).
\end{equation}

Merging (\ref{eqq0}) to (\ref{eqq2}), we get the result.
\end{proof}

\textbf{Step 2: Rate of consistency.}

To shorten the notation, $\hat{\tilde{\theta}}^*$ (resp. $\bar{\tilde{\theta}}^*,$ $\tilde{\theta}^*$) will denote $\hat{\tilde{\theta}}^*(s)$ (resp. $\bar{\tilde{\theta}}^*(s),$ $\tilde{\theta}^*(s)$) in this section. Let $\nabla_{\theta}$ denote the gradient vector of a function with respect to $\theta.$ By definition,
$$\nabla_{\theta} \hat{\mathfrak{L}}_{\hat{\tilde{\theta}}^*}^*(s)=0.$$
From a Taylor expansion and the consistency of $\hat{\tilde{\theta}}^*,$
\begin{equation}\label{it0} 0=\nabla_{\theta} \hat{\mathfrak{L}}^*_{\tilde{\theta}^*}(s)+[{\tilde{\theta}}^*-\hat{\tilde{\theta}}^*][\nabla_{\theta}^2 \hat{\mathfrak{L}}^*_{\tilde{\theta}^*}(s)+o_P(1)].\end{equation}
From the consistency of $\hat{S}$ and $\hat{\gamma}$ in Assumption \ref{a_donsker} and the law of large numbers, we get
$$\nabla_{\theta}^2 \hat{\mathfrak{L}}^*_{\tilde{\theta}^*}(s)=\nabla_{\theta}^2 \bar{\mathfrak{L}}^*_{\tilde{\theta}^*}(s)+o_P(1)= \nabla_{\theta}^2 {\mathfrak{L}}^*_{\tilde{\theta}^*}(s)+o_P(1)=\Sigma(s)+o_P(1).$$

Next,
\begin{eqnarray*}\nabla_{\theta} \hat{\mathfrak{L}}^*_{\tilde{\theta}^*}(s)&=&\nabla_{\theta}\bar{\mathfrak{L}}^*_{\tilde{\theta}^*}(s)+\frac{1}{m}\sum_{j=1}^m \left\{\nabla_{\theta}\Psi\left(\pi_{\tilde{\theta}^*}(s),\phi_{\tilde{\theta}^*}(\mathbf{W}_{j});\hat{S}(s|\mathbf{W}_{j}),\hat{\gamma}(\mathbf{W}_{j})\right)\right. \\ && \left. -\nabla_{\theta}\Psi\left(\pi_{\tilde{\theta}^*}(s),\phi_{\tilde{\theta}^*}(\mathbf{W}_{j});{S}(s|\mathbf{W}_{j}),{\gamma}(\mathbf{W}_{j})\right)\right\}.
\end{eqnarray*}
From Assumption \ref{a_donskerbis} and Lemma \ref{lemma_donsker}, we get, from the asymptotic equicontinuity property of Donsker classes (see section 2.8.2 in \cite{van1996weak}),
\begin{eqnarray*}
\frac{1}{m}\sum_{j=1}^m \left\{\nabla_{\theta}\Psi\left(\pi_{\tilde{\theta}^*}(s),\phi_{\tilde{\theta}^*}(\mathbf{W}_{j});\hat{S}(s|\mathbf{W}_{j}),\hat{\gamma}(\mathbf{W}_{j})\right) \right. \\ \left. -\nabla_{\theta}\Psi\left(\pi_{\tilde{\theta}^*}(s),\phi_{\tilde{\theta}^*}(\mathbf{W}_{j});{S}(s|\mathbf{W}_{j}),{\gamma}(\mathbf{W}_{j})\right)\right\} \\
=\int \left\{\nabla_{\theta}\Psi\left(\pi_{\tilde{\theta}^*}(s),\phi_{\tilde{\theta}^*}(\mathbf{w});\hat{S}(s|\mathbf{w}),\hat{\gamma}(\mathbf{w})\right)\right\}d\mathbb{P}(\mathbf{w})+o_P(m^{-1/2})=B_n+o_P(m^{-1/2}).
\end{eqnarray*}
Back to (\ref{it0}), we get
$${\tilde{\theta}}^*-\hat{\tilde{\theta}}^*=\Sigma^{-1}\left\{\frac{1}{m}\sum_{j=1}^m \nabla_{\theta}\Psi\left(\pi_{\tilde{\theta}^*}(s),\phi_{\tilde{\theta}^*}(\mathbf{W}_j);S(s|\mathbf{W}_j),\gamma(\mathbf{W}_j)\right)\right\}+O_P(B_n)+o_P(m^{-1/2}).$$

\subsection{Technical lemma on Donsker classes}
\label{sec:technical}

\begin{lemma}
\label{lemma_donsker}
Let $$\mathcal{A}=\left\{\mathbf{w}\rightarrow \nabla_{\theta}\Psi\left(\pi_{\theta}(s),\phi_{\theta}(\mathbf{w});h_1(s,\mathbf{w}),h_2(\mathbf{w})\right):h_1\in \mathcal{H}_1,h_2\in \mathcal{H}_2,\theta \in \Theta\right\}.$$
Then $\mathcal{A}$ is Donsker under Assumption \ref{a_donskerbis}.
\end{lemma}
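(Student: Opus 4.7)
The plan is to write $\nabla_{\theta}\Psi$ explicitly via the chain rule, decompose it into a finite sum of products of terms drawn from classes that are either assumed Donsker in Assumption \ref{a_donskerbis} or trivially so, and then invoke the standard permanence properties of the Donsker property (Section 2.10 in \cite{van1996weak}). Recalling from (\ref{Epsi}) and Theorem \ref{th_approx} that
$$\Psi(\pi_\theta(s),\phi(\theta,\mathbf{w});h_1,h_2) = L\bigl(1-f(\pi_\theta(s))\bigr)\Bigl[1 - h_1\,\Phi_0\bigl(\phi(\theta,\mathbf{w}),h_2\bigr)\Bigr],$$
I would differentiate to obtain
$$\nabla_\theta\Psi = A(\theta)\Bigl[1 - h_1\,\Phi_0(\phi(\theta,\mathbf{w}),h_2)\Bigr] - B(\theta)\,h_1\,\partial_1\Phi_0\bigl(\phi(\theta,\mathbf{w}),h_2\bigr)\,\nabla_\theta\phi(\theta,\mathbf{w}),$$
where $A(\theta) := -L'(1-f(\pi_\theta(s)))\,f'(\pi_\theta(s))\,\nabla_\theta\pi_\theta(s)$ and $B(\theta) := L(1-f(\pi_\theta(s)))$ are both uniformly bounded in $\theta$ by Assumption \ref{a_uf} (which bounds $L$, $L'$, $f'$ and $\partial\pi_\theta/\partial\theta$).

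Next, I would identify each piece as a (bounded) Donsker class. The sub-class $\{\mathbf{w}\mapsto A(\theta):\theta\in\Theta\}$ (and likewise for $B$) consists of functions that are constant in $\mathbf{w}$, indexed by $\theta\in\Theta\subset\mathbb{R}^k$ with $k$ fixed; its uniform entropy is controlled by its finite dimension, so it is Donsker. The classes $\mathcal{H}_1$, $\mathcal{H}_2$, $\{\mathbf{w}\mapsto\phi(\theta,\mathbf{w}):\theta\in\Theta\}$, and $\{\mathbf{w}\mapsto\nabla_\theta\phi(\theta,\mathbf{w}):\theta\in\Theta\}$ are Donsker by Assumption \ref{a_donskerbis}, with $\mathcal{H}_1$ and the gradient class being additionally bounded. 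To handle the compositions $\Phi_0(\phi(\theta,\mathbf{w}),h_2(\mathbf{w}))$ and $\partial_1\Phi_0(\phi(\theta,\mathbf{w}),h_2(\mathbf{w}))$, I would use that $\Phi_0(x,\gamma) = 1-\varphi_0(x)+\int_0^x v^{1/\gamma}\varphi_1(v)dv$ is continuously differentiable, with partial derivatives bounded on any compact subset of $\mathbb{R}_{\geq 0}\times[\gamma_-,\infty)$ (by boundedness of $\varphi_0,\varphi_1$ in Assumption \ref{a_uf} and $\gamma_->0$). An application of Theorem 2.10.6 and Example 2.10.7 in \cite{van1996weak} then yields that these composed classes are Donsker. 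Finally, products of bounded Donsker classes are Donsker, and finite sums of Donsker classes remain Donsker (Corollary 2.10.13 in \cite{van1996weak}); assembling the displayed decomposition of $\nabla_\theta\Psi$ through these product/sum operations produces $\mathcal{A}$.

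The main obstacle is the Lipschitz composition step: one must check that the arguments fed to $\Phi_0$ and $\partial_1\Phi_0$ almost surely lie in a region on which these maps are (globally) Lipschitz. A clean route is to exhibit an envelope $\Lambda$ for the gradient class (given by Assumption \ref{a_uf}) and an envelope for $\{\phi(\theta,\cdot)\}$ (implicit from the construction where $\phi$ stays bounded to avoid overcompensation, as discussed in sections \ref{sec:risk} and \ref{sec:simul_set}); on the resulting compact range of $(\phi,\gamma)$ the mean value theorem delivers the needed Lipschitz constants. The boundedness hypotheses in Assumption \ref{a_donskerbis} on $\mathcal{H}_1$ and the gradient class are crucial at the product step, since preservation of the Donsker property under products of classes requires at least one bounded factor.
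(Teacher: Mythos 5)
Your proof follows essentially the same route as the paper's: write out $\nabla_\theta\Psi$ via the chain rule, pull out the uniformly bounded $\theta$-dependent prefactors, show that $\Phi_0(\phi(\theta,\cdot),h_2(\cdot))$ and $\partial_1\Phi_0(\phi(\theta,\cdot),h_2(\cdot))$ are Donsker by Lipschitz composition (Theorem 2.10.6 in \cite{van1996weak}), and conclude by the product and sum permanence properties using the boundedness of $\mathcal{H}_1$ and of the gradient class. The only minor difference is that the paper secures the Lipschitz dependence on $h_2$ through the integrability condition $\int_0^\infty v^{\alpha}\varphi_1(v)\,dv<\infty$ with $\gamma_-\geq 1/\alpha$ from Assumption \ref{a_uf}, rather than through compactness of the range of $(\phi,\gamma)$ as you propose, but both devices serve the same purpose.
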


\begin{proof}
Let $$\mathcal{A}_3=\left\{\mathbf{w}\rightarrow \frac{\partial\phi_{\theta}(\mathbf{w})}{\partial \theta}:\theta\in \Theta\right\}.$$
We have
\begin{eqnarray*}\nabla_{\theta}\Psi\left(\pi_{\theta}(s),\phi_{\theta}(\mathbf{w});h_1(s,\mathbf{w}),h_2(\mathbf{w})\right) = \\
-f'(\pi_{\theta}(s))\frac{\partial \pi_{\theta}(s)}{\partial \theta}L'(1-f(\pi_{\theta}(s)))(1-h_1(s,\mathbf{w})) \Phi_0(\phi_{\theta}(\mathbf{w}),h_2(\mathbf{w}))
\\
+\frac{\partial \phi_{\theta}(\mathbf{w})}{\partial \theta}L\left(1-f(\pi_{\theta}(s))\right)(1-h_1(s,\mathbf{w}))\partial_1\Phi_0(\phi_{\theta}(\mathbf{w}),h_2(\mathbf{w})).
\end{eqnarray*}
The classes of functions $\mathcal{A}_1=\{\mathbf{w}\rightarrow \Phi_0(\phi_{\theta}(\mathbf{w}),h_2(\mathbf{w})):h_2\in \mathcal{H}_2\}$ and  $\mathcal{A}_2=\{\mathbf{w}\rightarrow \partial_1\Phi_0(\phi_{\theta}(\mathbf{w}),h_2(\mathbf{w})):h_2\in \mathcal{H}_2\}$ inherit the Donsker property of classes $\{\mathbf{w}\rightarrow \phi_{\theta}(\mathbf{w}):\theta\in \Theta\}$ and $\mathcal{H}_2$ from Theorem 2.10.6 in \cite{van1996weak}. Indeed,
\begin{eqnarray*}|\Phi_0(\phi_{\theta}(\mathbf{w}),h_2(\mathbf{w}))-\Phi_0(\phi_{\theta'}(\mathbf{w}),h'_2(\mathbf{w}))|&\leq &|\varphi_0(\phi_{\theta}(\mathbf{w}))-\varphi_0(\phi_{\theta'}(\mathbf{w}))|+\left|\int_{\phi_{\theta}(\mathbf{w})}^{\phi_{\theta'}(\mathbf{w})}v^{1/\gamma_-}\varphi_1(v)dv\right| \\
&&+\left|\frac{1}{h_2(\mathbf{w})}-\frac{1}{h'_2(\mathbf{w})}\right|\int_0^{\infty}v^{\alpha}\varphi_1(v)dv \\
&\leq &C\left(\left|\phi_{\theta}(\mathbf{w})-\phi_{\theta'}(\mathbf{w})\right|+\left|h_2(\mathbf{w})-h'_2(\mathbf{w})\right|\right),
\end{eqnarray*}
for some constant $C,$
and
\begin{eqnarray*}|\partial_1\Phi_0(\phi_{\theta}(\mathbf{w}),h_2(\mathbf{w}))-\partial_1\Phi_0(\phi_{\theta'}(\mathbf{w}),h'_2(\mathbf{w}))|\leq \\ |\varphi_0(\phi_{\theta}(\mathbf{w}))-\varphi_0(\phi_{\theta'}(\mathbf{w}))|\\
+\left|\phi_{\theta}(\mathbf{w})^{1/h_2(\mathbf{w})}\varphi_1(\phi_{\theta}(\mathbf{w})) -\phi_{\theta}(\mathbf{w})^{1/h'_2(\mathbf{w})}\varphi_1(\phi_{\theta'}(\mathbf{w}))\right|
\\
\leq C' \left(\left|\phi_{\theta}(\mathbf{w})-\phi_{\theta'}(\mathbf{w})\right|+\left|h_2(\mathbf{w})-h'_2(\mathbf{w})\right|\right),
\end{eqnarray*}
for a constant $C'.$ Then, the classes 
$(1-\mathcal{H}_1)\mathcal{A}_1$ and $\mathcal{A}_3(1-\mathcal{H}_1)\mathcal{A}_2$ are Donsker from the boundedness of $\mathcal{H}_1$ and $\mathcal{A}_3$ and Example 2.10.8 in \cite{van1996weak}. Note that the boundedness of $\mathcal{A}_3$ can be weakened by additional conditions on $\mathcal{A}_2$: the key issue is that the class $\mathcal{A}_3(1-\mathcal{H}_1)\mathcal{A}_2$ should be dominated by a $L^2$ random variable. Then, the class $\mathcal{A}$ is included in the addition of these two classes (Donsker from Example 2.10.7 in \cite{van1996weak}), up to multiplications by uniformly bounded quantities depending on $\theta$ (but not on $\mathbf{w}$). Theorem 2.10.6 in \cite{van1996weak} applies again, leading to the result. 
\end{proof}

\subsection{Additional information on the distribution of real data}
\label{sec:addresults}

\begin{figure}[!h]
    \centering
    \begin{subfigure}[b]{0.8\textwidth}
        \centering
        \includegraphics[width=\linewidth]{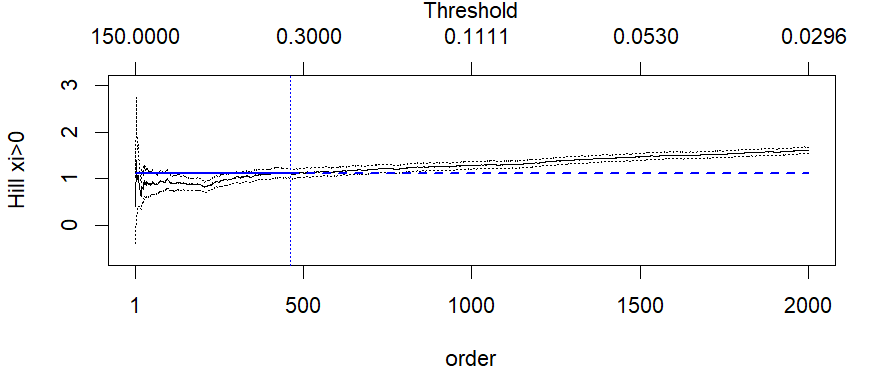}
        \caption{Hill plot}
    \end{subfigure}%
    \\[5pt]
    \begin{subfigure}[b]{0.8\textwidth}
        \centering
        \includegraphics[width=\linewidth]{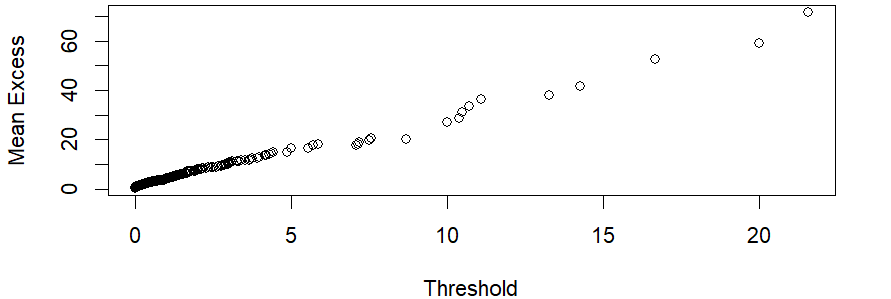}
        \caption{Mean excess plot}
    \end{subfigure}
    \\[5pt]
    \begin{subfigure}[b]{0.8\textwidth}
        \centering
        \includegraphics[width=\linewidth]{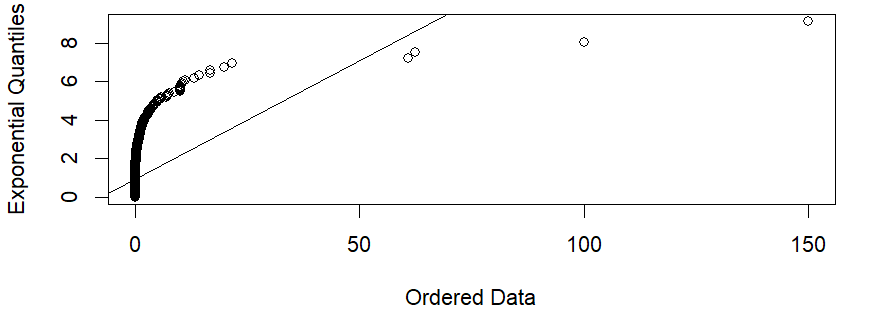}
        \caption{QQ plot}
    \end{subfigure}
    \\[5pt]
    \caption{Diagnosis plot for the tail of the distribution of tornado losses per unit area.}
    \label{fig:ev_analysis}
\end{figure}

\begin{figure}[!h]
    \centering
    \begin{subfigure}[b]{0.5\textwidth}
        \centering
        \includegraphics[width=\linewidth]{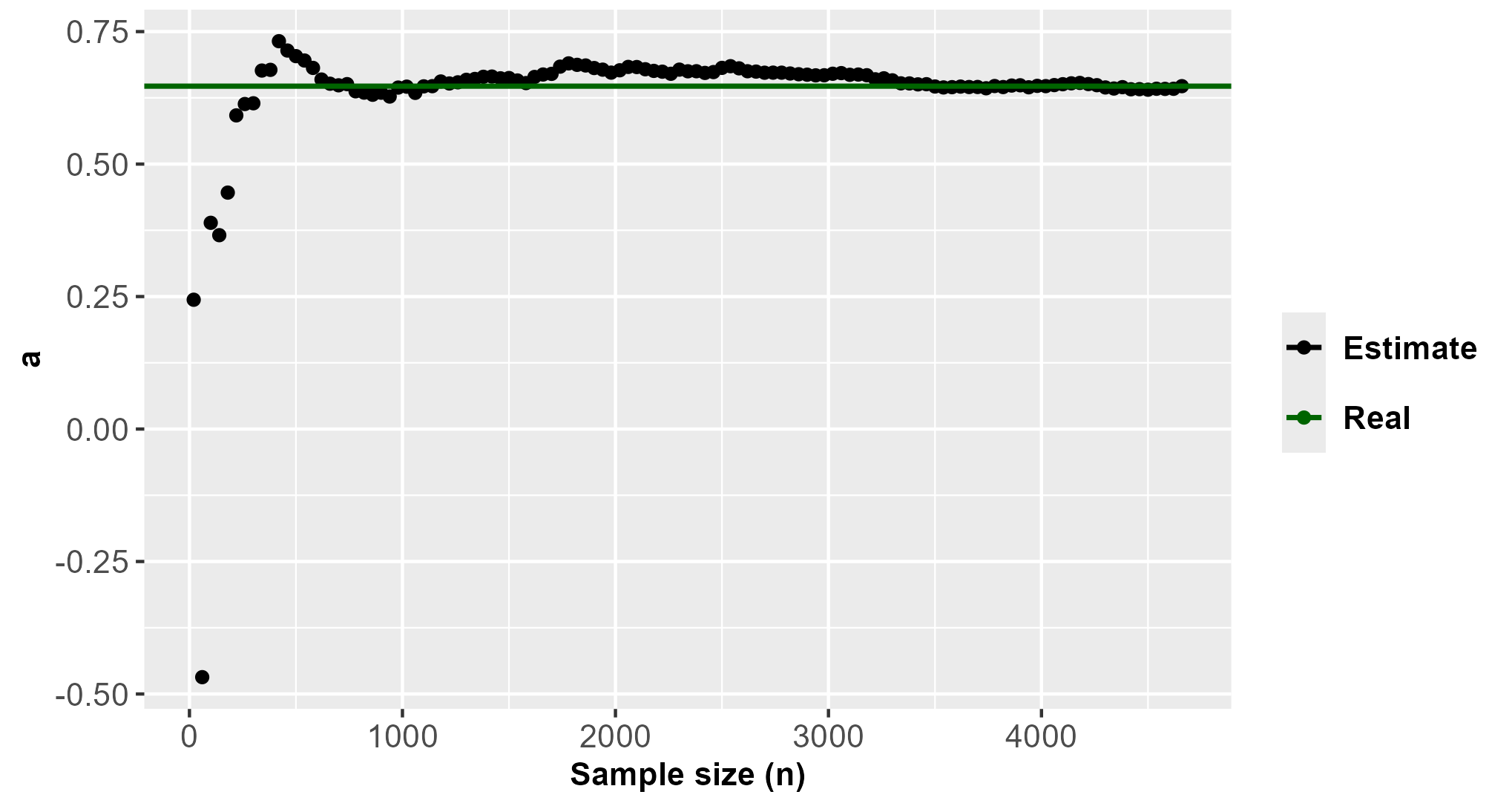}
        \caption{Estimates of the parameter $a$}
    \end{subfigure}%
    \begin{subfigure}[b]{0.5\textwidth}
        \centering
        \includegraphics[width=\linewidth]{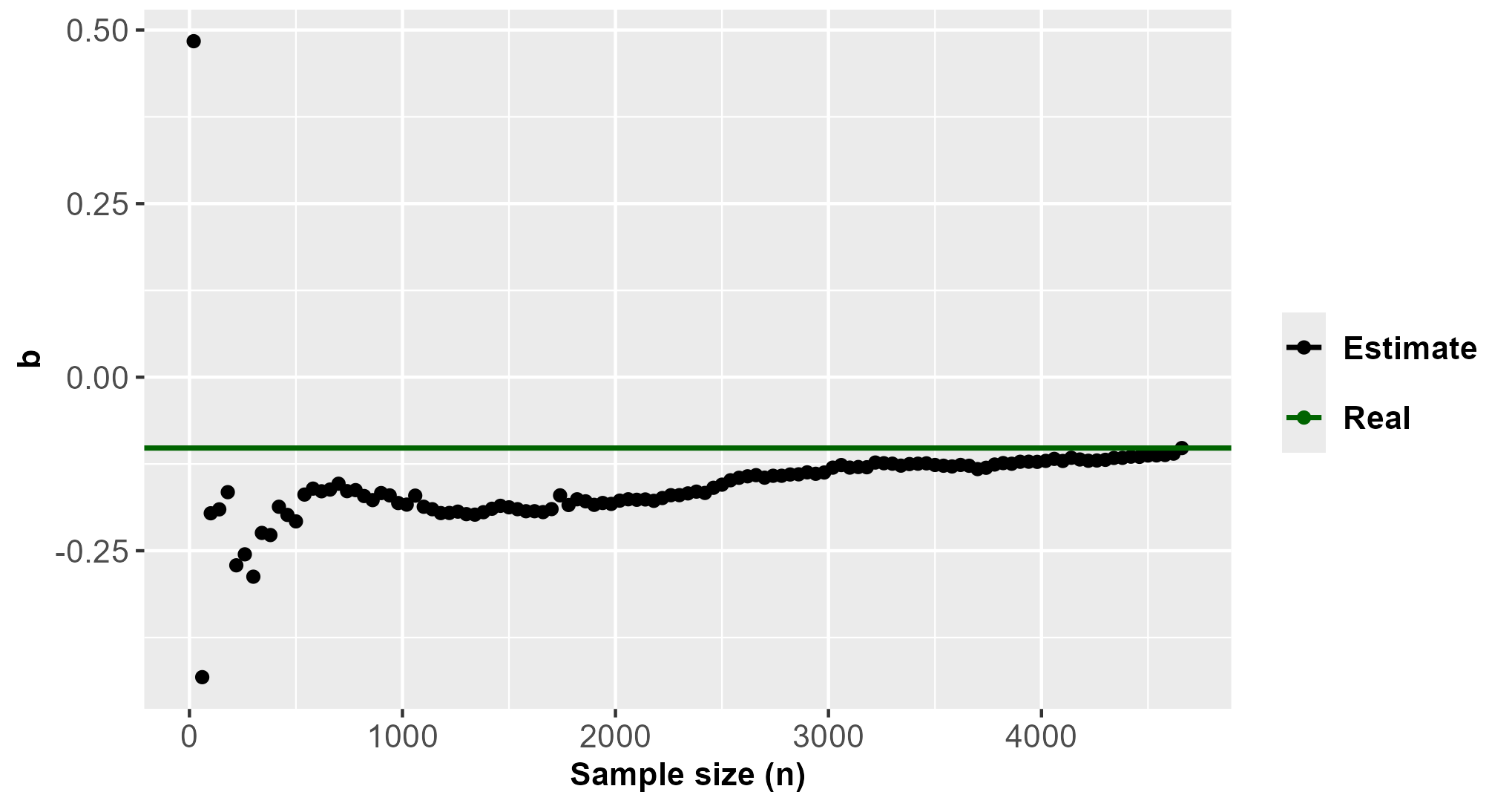}
        \caption{Estimates of the parameter $b$}
    \end{subfigure}
    \\[5pt]
    \begin{subfigure}[b]{0.5\textwidth}
        \centering
        \includegraphics[width=\linewidth]{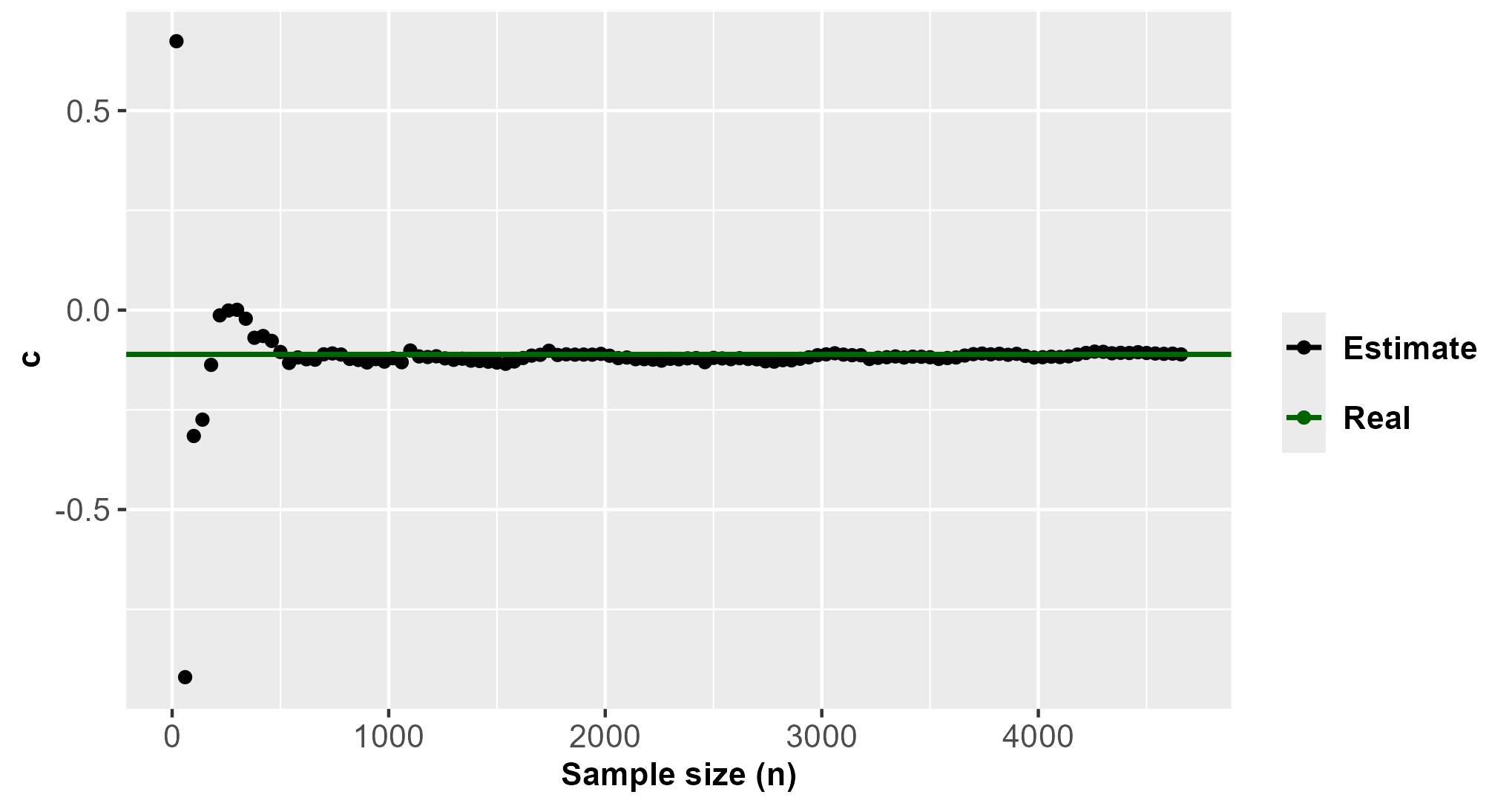}
        \caption{Estimates of the parameter $c$}
    \end{subfigure}
    \caption{Evolutions of the estimates of the parameters $a$ (panel (a)), $b$ (panel (b)) and $c$ (panel(c) of equation \ref{eq:gamma_real} with sample size.}
    \label{fig:param_real}
\end{figure}

\begin{table}[h!]
\centering
\begin{tabular}{lcccc}
\hline
\textbf{Parameter} & \textbf{Estimate} & \textbf{Std. Error} & \textbf{z value} & \textbf{Pr($>|z|$)} \\ \hline
\textbf{$\hat{a}$}     & 0.6470209 & 0.01854 & 34.898 & $ < 2e-16^{***}$ \\
\textbf{$\hat{b}$}     & -0.1020842  & 0.01859 & -5.492    & $3.96e-08^{***}$ \\
\textbf{$\hat{c}$}     & -0.1111014  & 0.01864 & -5.962    & $2.49e-09^{***}$ \\ 
\textbf{$\hat{d}$}     & -4.09608  & 0.03231 & -126.792    & $< 2e-16^{***}$ \\ 
\textbf{$\hat{e}$}     & 0.37194  & 0.03176 & 11.709    & $< 2e-16^{***}$ \\ 
\textbf{$\hat{f}$}     & 0.26602  & 0.03178 & 8.372    & $< 2e-16^{***}$ \\ \hline
\end{tabular}
\caption{Summary of the GPD shape and scale parameters fit on real data}
\label{tab:fitgpd}
\end{table}

\begin{table}[h!]
\centering
\begin{tabular}{lcccccc}
\hline
\multicolumn{7}{l}{\textbf{Model 1: $Y \sim 1$ (gpd fit with no covariates)}}   \\
\multicolumn{7}{l}{\textbf{Model 2: $Y \sim \mathbf{W}^{(1)} + \mathbf{W}^{(2)}$ (gpd fit with covariates)}} \\ \hline
\textbf{Model} & \textbf{Resid. Df} & \textbf{LogLik} & \textbf{AIC} & \textbf{Df} & \textbf{2 * LogLik Diff.} & \textbf{Pr(\textgreater{}Chi)} \\ \hline
\textbf{1}   & 9316  & 7658.7 & -15313.47 &   &        &                       \\
\textbf{2}   & 9312  & 7745.2 & -15478.32  & 4 & 172.85 & $<2.2e-16^{***}$ \\ \hline
\end{tabular}
\caption{Analysis of GPD fit deviance}
\label{tab:deviancegpd}
\end{table}

\newpage

\textbf{Acknowledgment:} Olivier Lopez acknowledges funding from the Excellence Chair CARE (Allianz, Ensae, Risk Fundation).

\bibliographystyle{abbrvnat}

\end{document}